\theoremstyle{plain}
\theoremstyle{plain}
\newtheorem{corollary}{Corollary}
\theoremstyle{plain}
\newtheorem{lemmacounter}{Theorem}
\newtheorem{lemma}[lemmacounter]{Lemma}
\theoremstyle{plain}
\theoremstyle{plain}
\newtheorem{special case}[defcounter]{Special Case}
\newcommand*{\Scale}[2][4]{\scalebox{#1}{$#2$}}%
\newcommand{\figref}[1]{Fig.~\ref{#1}}
\begin{document}

\title{Coverage and Rate Analysis for Co-Existing RF/VLC Downlink Cellular Networks}
\author{Hina Tabassum and Ekram Hossain\thanks{The authors are with the Department of Electrical and Computer Engineering at the University of Manitoba, Canada.
 (emails: \{Hina.Tabassum, Ekram.Hossain\}@umanitoba.ca). This work was supported by the Natural Sciences and Engineering Research Council of Canada (NSERC). 
}}
\maketitle

\begin{abstract}
Integrating visible light communication (VLC) with conventional radio frequency (RF)-enabled wireless networks has been shown to  improve the achievable data rates of mobile users. 
This paper provides a stochastic geometry framework to perform the coverage and rate analysis of a typical user in co-existing VLC and RF networks covering a large
indoor area. The developed framework can be customized to capture the performance of a typical user in various network configurations such as (i)~{\em RF-only}, in which only small base-stations (SBSs) are available to provide the coverage to a user, (ii)~{\em VLC-only}, in which only optical BSs (OBSs) are available to provide the coverage to a user, (iii)~{\em opportunistic RF/VLC}, where a user selects the network with maximum received signal power, and (iv) {\em hybrid RF/VLC}, where a user can simultaneously utilize the available resources from both RF and VLC networks. The developed model for VLC network precisely captures the impact of the field-of-view (FOV) of the photo-detector (PD) receiver on the number of interferers, distribution of the aggregate interference, association probability, and the coverage of a typical user. Closed-form approximations are presented for special cases of practical interest and for asymptotic scenarios such as when the intensity of SBSs becomes very low. The derived expressions enable us to obtain closed-form solutions for various network design parameters (such as intensity of OBSs and SBSs, transmit power, and/or FOV) such that the number of active users can be distributed optimally among RF and VLC networks. Also, we optimize the network parameters in order to prioritize the association of users to VLC network. Finally, simulations are carried out to verify the derived analytical solutions. It is shown that the performance of VLC network depends significantly on the receiver's FOV/intensity of SBSs/OBSs and careful selection of such parameters is crucial to harness the benefits of VLC networks. Important trade-offs between height and intensity of OBSs  are highlighted to optimize the performance of a user in VLC networks.
\end{abstract}

\newpage

\begin{IEEEkeywords}
Multi-cell downlink visible light communication (VLC) networks,  rate coverage probability, field-of-view, traffic load distribution, Poisson Point Process (PPP).
\end{IEEEkeywords}

\section{Introduction}

The scarcity of spectrum, cost, and interference  in traditional radio frequency (RF) spectrum pushes the network operators to exploit higher frequencies such as visible light communication (VLC) for cellular transmissions~\cite{1,2,3}.  VLC offers (a)~significantly higher transmission capacity due to wider modulation bandwidths, (b)~relatively secure transmissions and less susceptibility to electromagnetic interference due to higher penetration losses, (c)~exhaustive reuse of frequency,  (d)  reduced cost of wireless communication due to the unregulated  spectrum, and (e) power saving since the VLC transmitters can be used both for illumination and communication. VLC possesses a number of interesting features such as higher data rate and spectral efficiency, higher energy efficiency, lower battery consumption and latency  to address the requirements of  evolving 5G/B5G systems~\cite{feng2016applying}. Recently, four VLC standards have been developed that include Japan Electronics and Information Technology Industries Association (JEITA) CP-1221, JEITA CP-1222, JEITA CP-1223 and IEEE 802.15.7~\cite{khan2016visible}.

In VLC, white light is generated  using the wavelength converters and light emitting diodes (LEDs) on the transmitter side. Typically, the generation of white light using LED is trichromatic (red, green and blue) which ensures higher bandwidth as well as higher data rates. 
The modulation in VLC differs from that of RF due to the non-encoding feature of information in phase and amplitude of the light signal. Modulation in VLC is achieved using variations in the intensity of the light corresponding to the information in the message signal. 
In the VLC receiver, the light is detected  and then converted to photo current. VLC is vulnerable to interference from other light sources such as sunlight and other LEDs; therefore, optical filters are needed to mitigate the DC noise components present in the received signal. Modern VLC systems based on intensity modulation (IM) and direct detection (DD) with
optical orthogonal frequency division multiplexing (OFDM)
have been shown to achieve data rates in the range of Gbps~\cite{gbps1,gbps2}.

Although both VLC and RF transmissions use electromagnetic radiation for the information transfer, their properties differ significantly. The wavelength of the visible spectrum (380 nm to 750 nm) is much smaller than the  area of a photodetector (PD) receiver, which effectively removes {\em multi-path fading} (an important channel attenuation factor in RF transmissions). Further, optical signals do not interfere with the RF electronic systems and can be used in sensitive areas such as hospitals~\cite{hospital1} and aircrafts henceforth. Contrary to RF, VLC is susceptible to indoor blockages (walls, human, material objects, etc.) thus naturally confined to a small area.
The received power at PD  depends heavily on the  line of sight (LoS) signals that may get blocked due to limited field-of-view (FOV) of the PD receivers, and/or  radiance angle of the optical LEDs. As such, the dense deployment of  LEDs may not guarantee a reliable coverage. VLC is thus considered as a complimentary rather than substituting technology to RF~\cite{hybconf1,hybconf2, bao2014protocol}. 


\subsection{Background Work}

Recently, few research proposals have investigated the performance of hybrid RF/VLC systems~\cite{kashef2016energy,li2016mobility,wang2017load,li2015cooperative}. In \cite{kashef2016energy}, energy efficiency of an indoor network composed of a single RF base-station (BS)) and a single VLC BS has been maximized constrained by the required data rates for the users and the maximum allowable transmission powers for the BSs. Users are capable of receiving data from both VLC and RF communication systems. The energy efficiency of hybrid RF/VLC communication system is compared to that of the RF-only system. 
\cite{li2016mobility} proposes a mobility-aware load balancing scheme, which dynamically associates users to their corresponding BSs. Applying matching theory, the association problem is formulated and solved  as a college admission problem (CAP). In \cite{wang2017load}, evolutionary game theory-based load balancing algorithm is proposed for hybrid VLC/RF networks considering channel blockage and shadowing. The proposed scheme is shown to improve user satisfaction levels at reduced computational complexity. In \cite{li2015cooperative}, cooperative load balancing scheme with proportional fairness is proposed. 
Both centralized and distributed resource-allocation algorithms are developed. Results demonstrate that the proposed scheme provides a higher area spectral efficiency (ASE) with reasonable fairness. \cite{bao2017visible} proposed a
hybrid VLC/RF heterogeneous network (VLC-HetNet) where VLC and RF channels are  used for downlink and uplink transmissions, respectively. New VLC
frame, multi-user access mechanism, horizontal and vertical handover protocols are discussed.

In \cite{rakia2016optimal}, outage analysis of a dual-hop VLC/RF data transmission system is considered with energy harvesting. The energy carried by the DC component of the received optical signal is harvested for data retransmission at the relay instead of discarding it. The DC bias is optimized at the LED to maximize the overall transmission rate. 
Another relevant study is \cite{chen34downlink} where four different cellular network models (e.g., square, hexagonal, Poisson Point Process (PPP), Matern Hard core process [MHCP] models) for VLC networks are considered and the signal-to-interference-plus-noise ratio (SINR) outage of a typical user is derived.  It is shown that PPP-based cellular model is the most appropriate and tractable for the performance analysis in an indoor environment with multiple {\em attocells}\footnote{An {\em attocell} refers to the coverage area of an optical base station (OBS).}. 
Another interesting and very recent work on hybrid mm-wave and VLC network is conducted by \cite{shao2015design} where multiple VLC BSs and RF BSs are considered. In particular, this study quantifies the minimum spectrum and power requirements for RF network  to achieve certain per user rate coverage performances.

\subsection{Paper Contributions}
Complementary to the aforementioned works, the contributions of this paper can be summarized as follows:
\begin{itemize}
\item We consider a co-existing VLC and RF network covering a large
indoor area. The developed framework is unified to capture the performance of a typical user  in various network configurations such as (i)~{\em RF-only} in which only small BSs (SBSs) are available to provide the coverage to a user, (ii)~{\em VLC-only} in which only optical BSs (OBSs) are available to provide the coverage to a user, (iii)~{\em opportunistic RF/VLC} where a user selects the network with maximum received signal power, and (iv) {\em hybrid RF/VLC} where a user can  utilize the available resources from both RF/VLC networks.

\item The developed framework  is  precise in terms of capturing the impact of field-of-view (FOV) of the photo-detector (PD) receiver on the number of interferers, distribution of the aggregate interference, association probability, and the coverage of a typical user. Typically, the received signal as well as the interference model considers full FOV of $180^\circ$ at the PD receiver which leads to a worst-case bound on the interference~\cite{chen34downlink,shao2015design}. Nonetheless, for a typical user, the received interference power as well as the received signal power depends heavily on the FOV of the receiver.  Moreover, considering a full FOV of the PD receiver guarantees that the serving BS is always within the FOV of the desired receiver, which may not be true and hence  overestimate the performance of the optical network.

\item For a multi-cell VLC network, we first derive the necessary condition for a typical user to have an OBS within its FOV. We then derive the conditional Laplace Transform of the aggregate interference from OBSs in closed-form. Then we derive the exact coverage probability and rate using Gil-Pelaez inversion theorem~\cite{gil} and Hamdi's lemma~\cite{hamdi2010useful}, respectively. The approach is different from \cite{chen34downlink,shao2015design}  where the intensity of interference is first approximated using the moments of the interference and the coverage probability is then approximated in terms of one infinite integral, one infinite summation as well as two cascaded summations. 
{\em This work provides exact coverage and rate with double integrals}.
We also present approximate coverage probability for certain special cases.

\item For opportunistic RF/VLC networks, we first derive the  selection probability of an optical BS as well as RF SBS by a typical user. We then derive the distance distribution for the selected optical/RF BS and then determine the coverage probability of a typical user.  Closed-form expressions and approximations are presented for special cases of practical interest such as when the intensity of SBSs is asymptotically low. The derived expressions enable us to obtain closed-form solutions for various network parameters (such as intensity of OBSs, SBSs, transmit power, and/or FOV) in order to distribute the traffic load among different networks optimally. Also, we optimize the network parameters in order to prioritize the association of users to VLC network. 

\end{itemize}
Simulations are carried out to verify the derived analytical solutions. It is shown that the performance of VLC network depends significantly on the receiver's FOV/intensity of SBSs/OBSs and careful selection of such parameters is crucial to harness the benefits of VLC networks. 


\subsection{ Paper Organization and Notations}
\subsubsection{Paper Organization} The rest of the paper is organized as follows. System model is presented in Section~II along with the
description of the RF/VLC channel models. The exact coverage probability and rate  analysis of a typical user in isolated RF and VLC network is detailed in Section~III.  The exact coverage probability and rate are then derived for a typical user considering opportunistic RF/VLC and hybrid RF/VLC networks in Section~IV.
Numerical results are presented in Section~V and possible extensions to the framework are discussed in Section~VI followed by the conclusion in Section VII.

\subsubsection{Notations}
$\mathrm{Gamma}(\kappa_{(\cdot)},\Theta_{(\cdot)})$ denotes Gamma distribution with shape parameter $\kappa$, scale parameter $\Theta$ and $(\cdot)$ is the name of the random variable (RV). $\Gamma(a)=\int_0^\infty x^{a-1} e^{-x} dx$ is the Euler Gamma function, ${\Gamma}_u (a;b)=\int_b^\infty x^{a-1} e^{-x} dx$ is the upper incomplete Gamma function, and ${\Gamma}_l (a;b)=\int_0^b x^{a-1} e^{-x} dx$ is the lower incomplete Gamma function. $f_X(\cdot)$, $\mathbb{P}_X(\cdot)$, and $\mathcal{L}_X(\cdot)$ denote the probability density function (PDF), cumulative density function (CDF), and Laplace Transform of $X$, respectively. $\:_2F_1[a,b,c,z]$ represents Gauss Hyper-geometric function and $\mathrm{erf}(x)=1-\frac{2\int_x^\infty e^{-t^2} dt}{\sqrt{\pi}}$ represents error function.

\section{System Model}

\subsection{Network Deployment Model}
We consider a two-tier network with optical BSs (OBSs) and RF small-cell BSs (SBSs)  distributed according to a 2-D homogeneous PPP $\Phi_o$ and $\Phi_s$ with a density of $\lambda_o$ and $\lambda_s$, respectively. 
Note that the impact of
neighboring BSs that are located far away from the considered user
can be minimized by modeling the FOV limits of the PD receiver.
Each OBS consists of an LED lamp with
several LEDs. 
All OBSs reuse the same bandwidth so that
there is inter-cell interference (ICI). 
We consider that the LEDs are point sources with Lambertian emission and that they operate within the linear dynamic range of the current-to-power characteristic curve. It is also assumed that the LEDs are oriented vertically downwards. 
The users are distributed as a homogeneous PPP over
the coverage area $A$, i.e., the number
of users are Poisson distributed with intensity  $\lambda_u=c |A|$ where $c$ is the average number of users. The analysis is performed for a typical user  located at origin. 
Each user is equipped with both VLC and RF receivers. 

\subsection{Channel Model}

\subsubsection{RF Channel}
The RF communication channel power between a user and $i$-th SBS captures both channel fading and path-loss. The RF power is modeled using WINNER-II channel model as 
$
P^{\mathrm{rf}}_{i}= K v_i^{-\alpha} \chi_i,
$
where $\chi_i$ is the Nakagami fading channel (with Gamma distributed fading power with $\kappa$ and $\Theta$ as shape and scale parameters, respectively), $\alpha$ is the path-loss exponent, $v_i$ is the distance of the typical user to SBS, $K=10^{\frac{X}{10}}$, $X=B+C\mathrm{log}_{10}\left(\frac{f_c}{5}\right)$, $f_c$ 
is the carrier frequency in GHz, $B$ and $C$
are constants depending on the propagation model. For the LoS scenario, $B
=46.8$ and $C=20$.  For the non-LOS scenario, $B=43.8$ and $C=20$. 
Gamma distribution
is a versatile fading
distribution  which  includes  Rayleigh  distribution  for
$\kappa=1$
(for  non-LoS  conditions)  as  a  special  case
and  can  well-approximate  the  Rician  fading  distribution  for
$1 \leq \kappa \leq \infty$
(for strong LoS conditions).

\subsubsection{VLC Channel}
Each OBS is treated as a point source and the PD is installed on the user device facing upward.  The channel DC gain between a user and $i$-th OBS can be modeled using Lambertian emission model as follows~\cite{chmodel1,chmodel2}:
\begin{align} \label{vlcpower}
G^{\mathrm{vlc}}_{i}=&\frac{A_{\mathrm{pd}}(m+1)}{2 \pi u_{i}^2} \mathrm{cos}^m(\phi_i) T(\xi_i) G(\xi_i) \mathrm{cos}(\xi_i),
\end{align}
where $u_{i}$ is the distance of a typical user to $i$-th OBS, $m$ denotes the order of Lambertian emission and can be calculated as $m=-\frac{1}{\mathrm{log}_2(\mathrm{cos}(\Phi_{1/2}))}$, where $\Phi_{1/2}$ is the angle of radiance at which the emitted optical power from OBS is half of that emitted with $\Phi_{1/2}=0$.  $\phi_{i}$ represents the $i$-th LED/OBS irradiance angle with respect to the typical user, $\xi_i$ is the angle of incidence of $i$-th OBS to the typical user, $A_{\mathrm{pd}}$ denotes the detection area of the PD, $T(\xi_i)$ is the gain of the receiver's optical filter, and gain of the non-imaging concentrator can be given as $
G(\xi_i)=\frac{n^2}{\mathrm{sin}^2 \xi_{{\mathrm{fov}}}}, \:\:0\leq \xi_i \leq \xi_{\mathrm{fov}},
$
where $n$ is the ratio of {{speed of light in vacuum}} and {{velocity of light in the optical material}}, and $\xi_{\mathrm{fov}}$ is the half of the PD's FOV.  For visible light, the typical values of
$n$ lie between 1 and 2. 
\begin{figure*}
\centering
\includegraphics[scale=0.6]{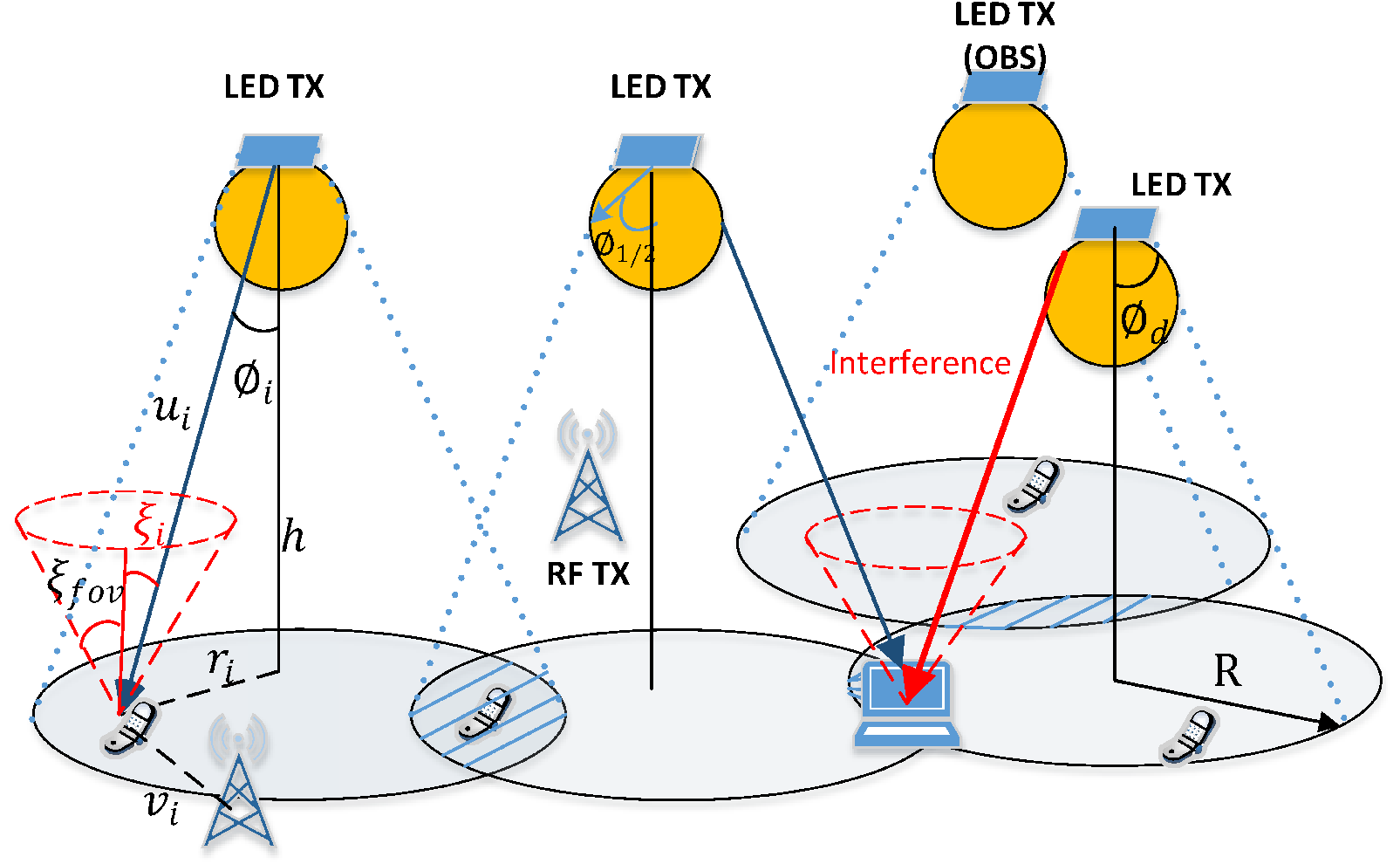}
\caption{Illustration of the downlink hybrid VLC/RF system which covers
a large indoor area.}
\label{vlc}
\end{figure*}
For LoS case\footnote{In this paper, reflection paths are not
considered  for
indoor visible light propagation. It is shown in \cite{chen34downlink,shao2015design} that the reflection paths have
an insignificant effect on the attocells that are sufficiently away
from the wall boundaries. Nonetheless, this is not a limitation and we will discuss an approach to incorporate reflections into the derivations in Section~V.} and given the geometrical illustration in \figref{vlc}, we can observe that $\mathrm{cos}\phi_i=\mathrm{cos}\xi_i=\frac{h}{u_{i}}=\frac{h}{\sqrt{r_i^2+h^2}}$, where $h$ denotes the fixed vertical separation between OBSs and the
user devices and  $r_i$ denotes the horizontal separation between the typical user and $i$-th OBS. \eqref{vlcpower} can then be rewritten as:
\begin{align}
G^{\mathrm{vlc}}_{i}
=\frac{A_{\mathrm{pd}}(m+1) T(\xi_i) G(\xi_i) h^{m+1}}{2 \pi (r_{i}^2+h^2)^{\frac{m+3}{2}}}, \: \: \:\:0\leq\xi_i\leq \xi_{\mathrm{fov}}.
\end{align}
The channel power can then be obtained as $P^{\mathrm{vlc}}_{i}=(G^{\mathrm{vlc}}_{i})^2$.

\subsection{Association and SINR Model}
We consider maximum received signal power-based association criterion which is equivalent to the nearest BS association criterion  for RF-only and VLC-only scenarios. For opportunistic RF/VLC, the typical user opportunistically selects the BS with maximum received signal power for transmission\footnote{By normalizing $P_i^{\mathrm{vlc}}$ with optical noise power, the association based on maximum received SNR can also be handled using this framework.}.  For hybrid RF/VLC networks, the typical user associates and transmits to both  RF and VLC networks. The maximum bandwidth allocated for a typical user in VLC and RF networks is $B_{o}$ and $B_{s}$, respectively. 

The SINR of a typical user  from its associated SBS (say $\mathcal{B}_0$) is given by

\begin{equation}
\gamma^{\mathrm{rf}}=\frac{P_s P_0^{\mathrm{rf}}}{\sum_{i \in \Phi_s \backslash \mathcal{B}_0 } P_s P_i^{\mathrm{rf}}+B_s N_s}, \nonumber
\end{equation}
where $P_i^{\mathrm{rf}}$ and $P_0^{\mathrm{rf}}$ represent
the received power of the typical user from $i$-th SBS and associated SBS, respectively, and the noise power spectral density is $N_{s}$.
The transmission power of a given  SBS is denoted by $P_{s}$.
For a typical user, the achievable data rate is given as ${B_s} \mathrm{log}_2(1+\gamma^{\mathrm{rf}})$. 

Similarly, the SINR of a typical user from  an OBS is given as 
$
\gamma^{\mathrm{vlc}}=\frac{ R^2_{\mathrm{pd}} P_o P_0^{\mathrm{vlc}}}{\sum_{i \in \Phi_o \backslash \mathcal{B}_0 } R^2_{\mathrm{pd}} P_o P_i^{\mathrm{vlc}}+B_o f^2 N_o},
$
where $P_i^{\mathrm{vlc}}$ and $P_0^{\mathrm{vlc}}$ represent
the received power of the typical user from $i$-th OBS and associated OBS, respectively, $R_{\mathrm{pd}}$ denotes the optical to electric conversion efficiency at the
receiver,  $N_{o}$ corresponds to the noise power spectral density caused by the received optical signal,  received ambient light (mainly daylight), and  thermal noise in the receiver circuit~\cite{chen34downlink}, and  $f$ denotes the ratio between the average transmitted optical
power (which is also proportional to the DC bias) and the
electrical power of the information signals without DC bias. 
Typically, as $f$ increases the probability of information signal
being outside the LED linear working region decreases. For
instance, $f$ = 3 means that approximately 0.3\% of the signal
is clipped. In this case, the clipping noise can be considered as
negligible. In DCO-OFDM, the achievable data rate by a
typical user can be expressed $\frac{B_o}{2} \mathrm{log}_2(1+\gamma^{\mathrm{vlc}})$~\cite{chen34downlink}.
The optical transmit power  allocated to a user by a given OBS is denoted by $P_{\mathrm{opt}}$. All LEDs fully reuse the  modulation bandwidth available and emit the same average optical power. 
Given the optical-to-electrical conversion efficiency $\kappa$, the electrical transmit power of an OBS is given as $P_{o}=\frac{P^2_{\mathrm{opt}}}{\kappa^2}$.
 

\section{Rate and Coverage Probability Analysis for Isolated RF and VLC Networks}
The coverage probability of a typical user is defined 
as the probability that its instantaneous SINR exceeds the target SINR threshold. Since the modulation bandwidth of a given OBS and SBS can vary, investigating the rate coverage probability of a typical user is more precise. As such, we define the target rate requirement of a typical user as $R_{\mathrm{th}}$ and, subsequently, the minimum target SINR of the typical user can be given as $\tilde\gamma^{\mathrm{vlc}}=2^{\frac{R_{\mathrm{th}}}{B_{\mathrm{vlc}}}}-1$ and $\tilde\gamma^{\mathrm{rf}}=2^{\frac{R_{\mathrm{th}}}{B_{\mathrm{rf}}}}-1$ for VLC and RF networks, respectively.

The  average  achievable  rate  is  another  important  performance  metric  in  a  wireless  communication  system to define the average data  rate  that  a  cellular  network  can  support on a given bandwidth.
The  average rate (or more precisely spectral efficiency) can be expressed as:
\begin{equation}\label{cap01}
\mathbb{E}[\mathrm{log}_2(1+\mathrm{SINR})]=\frac{1}{\mathrm{ln}(2)} \int_0^\infty \frac{\mathbb{P}(\mathrm{SINR}>t) }{t} dt.
\end{equation}

\subsection{Coverage Probability for an RF-Only Network}
The coverage probability of a typical user $(\mathcal{C}_{s})$ in an RF-only environment considering  Nakagami-$m$ fading (i.e., $\chi\sim \mathrm{Gamma}(\kappa, \Theta)$) where $\kappa$ and $\Theta$ represents shadowing severity and scaling parameters, respectively,  can be derived 
as follows:
\begin{align}
\mathcal{C}_{s}=&\mathbb{E}_v\left[\mathbb{P}\left(\frac{K P_s v^{-\alpha} \chi_0}{I+B_s N_s} >\tilde\gamma^{\mathrm{rf}}\right)\right]
\stackrel{(a)}{=}
\mathbb{E}_v\left[\mathbb{E}_I\left[\frac{\Gamma_u(\kappa, \frac{\tilde\gamma^{\mathrm{rf}}(I+B_s N_s)}{K P_s v^{-\alpha} \Theta})}{\Gamma(\kappa)}\right]\right],
\nonumber\\
\stackrel{(b)}{=}&
\mathbb{E}_{v,I}\left[\mathrm{exp}\left(- \frac{\tilde\gamma^{\mathrm{rf}}(I+B_s N_s)}{K P_s v^{-\alpha} \Theta}\right) \sum_{n=0}^{\kappa-1} \frac{(\frac{\tilde\gamma^{\mathrm{rf}}(I+B_s N_s)}{K P_s v^{-\alpha} \Theta})^n}{n!}\right],
\nonumber\\\stackrel{(c)}{=}&
\mathbb{E}_{v}\left[\sum_{n=0}^{\kappa-1} 
\left.\frac{(-s)^n}{n!}
\frac{d^n}{ds^n} \mathcal{L}_{Z} (s)\right]\right|_{s=\frac{\tilde\gamma^{\mathrm{rf}} v^{\alpha}}{K P_s  \Theta}},
\label{cov1}
\end{align}
where $I=\sum_{i \in \Phi_s \backslash B_0 } K P_s \tilde{v}_{i}^{-\alpha} \chi_i$ and $f_V(v)=\frac{2 \pi \lambda_s v e^{-\pi \lambda_s v^2}}{U}$ is truncated exponential distribution, where ${U}=1-e^{-\pi \lambda_s R_m^2}$ is the probability of at least one SBS with in $R_m$.   Since $R_m$ is large, $U \approx 1$.
The  value $\kappa=1$ results  in  the  Rayleigh-fading model, whereas the values of $\kappa <1$ represent 
channel fading more severe than  Rayleigh  fading  and  values
of $\kappa > 1$ correspond  to  channel  fading  less  severe  than Rayleigh fading. Note that (a) follows from the definition of the CDF of Gamma distribution, (b) follows from the definition of the upper incomplete Gamma function for integer parameters $\frac{\Gamma(\kappa, x)}{\Gamma(\kappa )}=\mathrm{exp}(- x) \sum_{n=0}^{\kappa-1} \frac{x^n}{n!}$, and (c) follows from the Laplace Transform property $\int_0^\infty e^{-s x} x^n f(x) dx=(-1)^n \frac{d^n}{ds^n} \mathcal{L}_{X} (s)$ and considering $Z=I+B_s N_s$. Note that  $
\mathcal{L}_Z(s)=\mathcal{L}_I(s) e^{-s  B_s N_s}$ and a closed-form for $\mathcal{L}_{I}(s)$ can be derived as:
\begin{align}
\mathcal{L}_{I}(s)=&\mathrm{exp}\left(-2 \pi \lambda_s \int_r^{R_m} \frac{(1 + { s  \tilde{v}^{-\alpha}})^\kappa-1}{(1 + { s \tilde{v}^{-\alpha}})^\kappa} \tilde{v} d\tilde{v}\right)
=
\mathrm{exp}(-\pi \lambda_s (\Phi(s,R_{m})-\Phi(s,r))),
\label{LI}
\end{align}
where $\Phi(s,x)=x^2-x^2\:_2F_1[\kappa,-\frac{2}{\alpha},1-\frac{2}{\alpha},-s P_s x^{-\alpha}]$. 

Although the aforementioned approach provides an exact evaluation of the coverage probability, its computation can be cumbersome depending on the value of $\kappa$. The higher values of $\kappa$ require higher-order derivatives of the Laplace Transform. Motivated by the aforementioned complexity issues, in the following, we discuss a more tractable approximation of the coverage probability for Nakagami-$m$ fading channels using the bounds on incomplete Gamma function as given  in the following~\cite{gammaapprox}:
\begin{equation}\label{approx}
(1-e^{-p x})^\kappa < \frac{\Gamma_l(\kappa,x)}{\Gamma(\kappa)}<(1-e^{- \tilde{p} x} )^\kappa, 
\end{equation}
where $0 \leq x\leq \infty$, $\kappa>0$, and $\kappa \neq 1$, where
\begin{equation}
\tilde p=
\begin{cases}
(\kappa!)^{-\frac{1}{\kappa}}, & 0 <\kappa<1\\
1, & \kappa>1\\
\end{cases},
\quad\quad
 p=
\begin{cases}
1, & 0 <\kappa<1\\
(\kappa!)^{-\frac{1}{\kappa}}, & \kappa>1\\
\end{cases}.
\end{equation}
As $\kappa \rightarrow 1$, both bounds tend to the exact value $1-\mathrm{exp}(-x)$, which is the value of $\gamma(1,x)$. In the domain $0 < \kappa <1$, the bounds (also referred to as {\em Alzer's inequality} in \cite{gammaapprox,bai2015coverage})  are sharper if $x$ is small. Using the aforementioned inequality, the coverage probability can be approximated as in the following.

\begin{lemma}[Approximate Coverage Probability in Nakagami-$m$ fading Channels] To approximate the coverage probability, we use the tight upper bound of the CDF of the Gamma RV from \eqref{approx} as
$
\mathbb{P}(\chi_0 \leq T)=\frac{\Gamma_l[\kappa, \frac{T}{\Theta}]}{\Gamma(\kappa)}< (1-e^{-\tilde p T })^\kappa,
$
where $\tilde p=(\kappa!)^{-1/\kappa}$ and $T>0$. This bound approximates the tail probability of a gamma RV. 
The approximate coverage probability can then be given as:
\begin{equation*}
1- \sum_{l=1}^{\kappa} (-1)^l {\kappa \choose l} \mathbb{E}_v\left[ e^{- \frac{p l \tilde\gamma^{\mathrm{rf}} B_s N_s}{P_s v^{-\alpha}}} \mathcal{L}_{I}\left(\frac{p l \tilde\gamma^{\mathrm{rf}}}{P_s v^{-\alpha}}\right)\right].
\end{equation*}
\end{lemma}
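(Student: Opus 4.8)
The plan is to reduce the averaged tail probability of the desired fading gain to the Laplace transform $\mathcal{L}_I$ that is already available in closed form in \eqref{LI}. Starting from \eqref{cov1}, I would write the coverage event as $\{\chi_0 > T\}$ with $T=\frac{\tilde\gamma^{\mathrm{rf}}(I+B_sN_s)}{KP_sv^{-\alpha}}$, so that conditioning on the aggregate interference $I$ and the serving distance $v$ gives $\mathcal{C}_s=\mathbb{E}_{v,I}\big[1-\mathbb{P}(\chi_0\le T\mid I,v)\big]$, where $\mathbb{P}(\chi_0\le T\mid I,v)=\Gamma_l(\kappa,T/\Theta)/\Gamma(\kappa)$ is the gamma CDF.

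The central step is to replace this CDF by the Alzer bound of \eqref{approx}, i.e.\ $\Gamma_l(\kappa,T/\Theta)/\Gamma(\kappa)<(1-e^{-\tilde p\,T/\Theta})^\kappa$, which trades the incomplete gamma function for an elementary expression at the cost of turning the identity into a one-sided (tight) approximation. For integer $\kappa$ the power $(1-e^{-\tilde p\,T/\Theta})^\kappa$ expands by the binomial theorem into the \emph{finite} sum $\sum_{l=0}^{\kappa}\binom{\kappa}{l}(-1)^l e^{-\tilde p\,l\,T/\Theta}$, so linearity of expectation lets me interchange the sum and $\mathbb{E}_{v,I}$ with no convergence concern. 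The $l=0$ term equals $1$ and cancels the leading $1$, leaving $\mathcal{C}_s$ as a finite alternating sum of the averages $\mathbb{E}_{v,I}\big[e^{-\tilde p\,l\,T/\Theta}\big]$ for $l=1,\dots,\kappa$.

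Each such average is evaluated by exploiting that $T$ is affine in $I$: I would factor $e^{-\tilde p\,l\,T/\Theta}=\exp\big(-\frac{\tilde p\,l\,\tilde\gamma^{\mathrm{rf}}B_sN_s}{KP_s\Theta v^{-\alpha}}\big)\,\exp\big(-\frac{\tilde p\,l\,\tilde\gamma^{\mathrm{rf}}}{KP_s\Theta v^{-\alpha}}\,I\big)$, pull the deterministic noise factor outside $\mathbb{E}_I$, and recognise the remaining interference average as $\mathcal{L}_I(s)$ evaluated at $s=\frac{\tilde p\,l\,\tilde\gamma^{\mathrm{rf}}}{KP_s\Theta v^{-\alpha}}$. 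Inserting the closed form \eqref{LI} for $\mathcal{L}_I$ and leaving the outer average over the truncated distance density $f_V$ as a single integral (with $\mathbb{E}_v[1]=1$ since $U\approx1$) produces the stated expression.

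The step needing the most care is not analytic but combinatorial and notational: keeping the sign pattern of the binomial expansion straight — in particular that it is the $l=0$ term that supplies the leading constant — and selecting the correct Alzer constant for the regime at hand, since \eqref{approx} interchanges the roles of $p$ and $\tilde p$ across $\kappa=1$ and only one of the two inequalities is the \emph{tight} side there. A secondary caveat is that the binomial expansion terminates only for integer $\kappa$; for non-integer shape parameters one must use the generalized binomial series and justify the term-by-term interchange by dominated convergence, which is why the result is most cleanly stated for integer $\kappa$.
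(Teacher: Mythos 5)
Your proposal is correct and follows essentially the same route as the paper: apply the Alzer upper bound to the lower incomplete Gamma CDF, expand $(1-e^{-\tilde p\,T/\Theta})^{\kappa}$ by the finite binomial theorem, and identify each term $\mathbb{E}_I[e^{-slI}]$ with the closed-form $\mathcal{L}_I$ of \eqref{LI} after factoring out the deterministic noise exponential. Your caveats about the $p$ versus $\tilde p$ bookkeeping and the cancellation of the $l=0$ term against the leading $1$ are well taken — the paper's own statement is in fact loose on exactly these two points — so no gap remains.
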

\begin{proof}
From \eqref{cov1}, we can write the coverage probability as follows:
\begin{align*}
\mathcal{C}_{s}=&
\mathbb{E}_v\left[\mathbb{E}_I\left[1-\frac{\Gamma_l(\kappa, \frac{\tilde\gamma^{\mathrm{rf}}(I+B_s N_s)}{K P_s v^{-\alpha} \Theta})}{\Gamma(\kappa)}\right]\right]
\stackrel{(a)}{\approx}
1-
\mathbb{E}_v\left[\mathbb{E}_I\left[
\left(1-e^{-\tilde p \frac{\tilde\gamma^{\mathrm{rf}}(I+B_s N_s)}{K P_s v^{-\alpha} \Theta})}\right)^\kappa
\right]\right],
\nonumber\\
\stackrel{(b)}{=}&
1-
\mathbb{E}_v\left[\mathbb{E}_I\left[
\sum_{l=0}^{\kappa} {\kappa \choose l} (-1)^l 
e^{-p l \frac{\tilde\gamma^{\mathrm{rf}}(I+B_s N_s)}{K P_s v^{-\alpha} \Theta})}
\right]\right],
\end{align*}
where (a) is obtained using the lower bound and (b) is obtained using the Binomial expansion. Applying the definition of Laplace Transform $\mathcal{L}_{I}(\cdot)$, the coverage expression can be given as in {\bf Lemma~1}, where the closed-form expression for $\mathcal{L}_{I}(\cdot)$ can be given as in \eqref{LI}.
\end{proof}

For the Rayleigh fading,  exact coverage probability can be derived as follows:
\begin{align}\label{rfcov}
\mathcal{C}_s= \int_0^{R_m} \frac{2 \pi \lambda_s}{U}  
e^{-\pi \lambda_s v^2(1+\rho)-\frac{\tilde\gamma^{\mathrm{rf}} B_s N_s}{P_s v^{-\alpha}}} v dv,
\end{align}
where $\tilde{v}$  denotes the distance between typical user and interferers and 
$
\rho(r)=\int_{(\tilde\gamma^{\mathrm{rf}})^{-\frac{2}{\alpha}}}^{\frac{R_m^2 (\tilde\gamma^{\mathrm{rf}})^{-\frac{2}{\alpha}}}{r^2}} \frac{(\tilde\gamma^{\mathrm{rf}})^{2/\alpha}}{1+u^{\alpha/2}} du.
$
Note that the general approach to derive the coverage probability remains the same as in~\cite[Theorem~1]{andrews2011tractable}. However, due to the finiteness of the considered indoor area, $\rho$ depends on $r$, thus leading to a finite integral and a different closed-form solution for $\rho(r)$ as 
\begin{equation}
\frac{v^2 \:_2F_1\left[1,\frac{2}{\alpha}, 1+\frac{2}{\alpha}, \frac{-1}{\tilde\gamma^{\mathrm{rf}}}\right] -R_{m}^2\:_2F_1[1,\frac{2}{\alpha}, 1+\frac{2}{\alpha}, -\frac{(\frac{R_m}{r})^\alpha}{\tilde\gamma^{\mathrm{rf}}}] }{(\tilde\gamma^{\mathrm{rf}})^{\frac{2}{\alpha}} v^2}.
\end{equation}

\subsection{Coverage Probability for a VLC-Only Network}

Since we assume maximum received signal power (nearest BS)-based association for VLC-only networks, the distribution of the distance of the typical user with its associated BS  can be given as 
$
f_{r}(r)={2 \pi \lambda_o r \mathrm{exp}(-\pi \lambda_o r^2)}/U.
$
The coverage probability of a typical user relies on the probability that at least an OBS  should be located within the FOV of the typical user. Consequently, for a given distance $u=\sqrt{h^2+r^2}$ between the typical user and an OBS, the probability can be given as in the following.

\begin{lemma}
An OBS exists within the FOV of the typical user iff 
$
\mathrm{tan}^{-1}\left(\frac{h}{r}\right)=\mathrm{cos}^{-1}\left(\frac{h}{u} \right)\leq \xi_{\mathrm{fov}}
$. That is,
$
r \leq h\mathrm{tan}(\xi_{\mathrm{fov}}) =\hat{\mathcal{T}}
$
given that $\hat{\mathcal{T}} \leq R_m$. 
Consequently, defining $\mathcal{T} =\mathrm{min}(R_m, \hat {\mathcal{T}})$, we can write the probability of at least one BS within FOV of the typical user as follows.
\begin{align}\label{radi}
\mathbb{P}(r \leq {\mathcal{T}})=[{1-\mathrm{exp}(-\lambda_o \pi \mathcal{T}^2)}]/U.
\end{align}
\end{lemma}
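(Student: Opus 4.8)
The plan is to separate the claim into a deterministic geometric step that reduces the field-of-view constraint to a bound on the horizontal distance $r$, followed by a void-probability computation for the OBS point process restricted to the disk of radius $R_m$.

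First I would fix the geometry. Since the PD faces vertically upward, its surface normal is vertical and the incidence angle $\xi_i$ of the ray arriving from the $i$-th OBS is the angle that ray makes with the vertical. In the right triangle with vertical leg $h$, horizontal leg $r_i$, and hypotenuse $u_i=\sqrt{r_i^2+h^2}$, this yields $\cos\xi_i=h/u_i$, i.e. $\xi_i=\cos^{-1}(h/u_i)$, and equivalently $\tan\xi_i=r_i/h$. The $i$-th OBS is seen by the receiver exactly when its incidence angle does not exceed the FOV half-angle, $\xi_i\le\xi_{\mathrm{fov}}$. As $\tan(\cdot)$ is strictly increasing on $[0,\pi/2)$, this is equivalent to $r_i/h\le\tan(\xi_{\mathrm{fov}})$, i.e. $r_i\le h\tan(\xi_{\mathrm{fov}})=\hat{\mathcal{T}}$, which is the stated condition provided $\hat{\mathcal{T}}\le R_m$.

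Next I would account for the finite coverage region. Because no OBS is admitted beyond horizontal distance $R_m$, the effective capture radius is the smaller of $\hat{\mathcal{T}}$ and $R_m$; hence with $\mathcal{T}=\min(R_m,\hat{\mathcal{T}})$ the event that at least one OBS lies within the FOV coincides with the event that the nearest OBS has horizontal distance at most $\mathcal{T}$. I would then integrate the truncated nearest-OBS density given above,
\begin{equation*}
\mathbb{P}(r\le\mathcal{T})=\int_0^{\mathcal{T}}\frac{2\pi\lambda_o r\,e^{-\pi\lambda_o r^2}}{U}\,dr=\frac{1}{U}\Big[-e^{-\pi\lambda_o r^2}\Big]_0^{\mathcal{T}}=\frac{1-e^{-\lambda_o\pi\mathcal{T}^2}}{U},
\end{equation*}
where I use that $2\pi\lambda_o r\,e^{-\pi\lambda_o r^2}$ is the exact derivative of $-e^{-\pi\lambda_o r^2}$. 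Equivalently, the number of OBSs in a disk of radius $\mathcal{T}$ is Poisson with mean $\lambda_o\pi\mathcal{T}^2$, so the disk is empty with probability $e^{-\lambda_o\pi\mathcal{T}^2}$, and the factor $1/U$ with $U=1-e^{-\pi\lambda_o R_m^2}$ renormalizes for the conditioning on at least one OBS inside $R_m$; this reproduces \eqref{radi}.

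The argument is largely routine; the only step that needs care is the bookkeeping of the truncation, namely recognizing that the FOV radius must be capped at $R_m$ through $\mathcal{T}=\min(R_m,\hat{\mathcal{T}})$ and that the normalizer $U$ in the distance density must match the conditioning used throughout. For large $R_m$ one has $U\to 1$, and the expression reduces to the standard infinite-plane void probability.
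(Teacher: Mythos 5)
Your proposal is correct and follows essentially the same route as the paper: the FOV condition is reduced to the geometric bound $r\le h\tan(\xi_{\mathrm{fov}})$ capped at $R_m$, and the probability is then obtained from the void probability of the homogeneous PPP in the disk of radius $\mathcal{T}$ (equivalently, integrating the truncated nearest-neighbour density), normalized by $U$. You merely spell out the right-triangle geometry and the integration more explicitly than the paper does, which asserts the geometric condition in the lemma statement and invokes the null probability directly.
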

\begin{proof}
Using the null probability of a 2-D PPP 
$\Phi_o$,  $\mathbb{P}(r \leq \hat{\mathcal{T}})={1-\mathrm{exp}(-\lambda_o \pi \hat{\mathcal{T}}^2)}$ whereas, if $\hat{\mathcal{T}} \geq R_m$, $\mathbb{P}(r \leq \hat{\mathcal{T}})=\mathbb{P}(r \leq R_m)=1$. Subsequently, $\mathbb{P}(r \leq {\mathcal{T}})$ can be given as in {\bf Lemma~2}. 
\end{proof}

As the number of interferers for a small indoor area and/or smaller FOV  of the PD receiver is quite small, the coverage probability  in noise-limited regime is of practical relevance.

\begin{lemma}[Coverage Probability in Noise-Limited Regime] 
The coverage probability of a typical user in a VLC network can be given in closed-form  for noise-limited regime as follows:
\begin{align*}
\mathcal{C}&=\mathbb{P}\left( 
\frac{R^2_{\mathrm{pd}} P_o Z}{\tilde\gamma^{\mathrm{vlc}} {{B}_o f^2  N_o}}
>(r^2+h^2)^{m+3}\right),
\\\nonumber
&
\stackrel{(a)}{=}
\frac{1}{U}
\begin{cases}
{1-e^{-\pi \lambda_o \left(\left(\frac{R^2_{\mathrm{pd}} P_o Z}{\tilde\gamma^{\mathrm{vlc}} {B}_o f^2 N_o}\right)^{\frac{1}{m+3}}-h^2\right)}}, \; \mathrm{if}
& \left(\frac{R^2_{\mathrm{pd}} P_o Z}{\tilde\gamma^{\mathrm{vlc}} {B}_o f^2 N_o}\right)^{\frac{1}{m+3}}-h^2 \leq \mathcal{T}^2\\
1-e^{-\pi \lambda_o \mathcal{T}^2},& \mathrm{otherwise}
\end{cases}.
\end{align*}
Note that the coverage occurs only when $r \leq \mathcal{T}$ otherwise there is no coverage. 
\end{lemma}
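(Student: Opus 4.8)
The plan is to collapse the coverage event into a single radial constraint on the distance $r$ from the typical user to its associated (nearest) OBS, and then read off the probability from the void probability of $\Phi_o$ already used in Lemma~2. First I would specialize the SINR $\gamma^{\mathrm{vlc}}$ to the noise-limited regime by discarding the interference sum in its denominator, leaving $\gamma^{\mathrm{vlc}}=R^2_{\mathrm{pd}} P_o P_0^{\mathrm{vlc}}/(B_o f^2 N_o)$. Substituting the LoS channel gain, $P_0^{\mathrm{vlc}}=(G_0^{\mathrm{vlc}})^2$ is a deterministic, strictly decreasing function of $r^2+h^2$; concretely $P_0^{\mathrm{vlc}}=Z\,(r^2+h^2)^{-(m+3)}$ with $Z$ absorbing the constants $A_{\mathrm{pd}}$, $m$, $T(\xi_0)$, $G(\xi_0)$, $h$. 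The requirement $\gamma^{\mathrm{vlc}}>\tilde\gamma^{\mathrm{vlc}}$ then rearranges directly into the first displayed inequality of the lemma, since VLC carries no fading and the only randomness is in $r$.

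Writing $\Xi=\frac{R^2_{\mathrm{pd}} P_o Z}{\tilde\gamma^{\mathrm{vlc}} B_o f^2 N_o}$ for the deterministic left-hand side, and using that $(r^2+h^2)^{m+3}$ is strictly increasing in $r$, the coverage event is equivalent to $r^2+h^2<\Xi^{1/(m+3)}$, i.e. $r<\sqrt{\Xi^{1/(m+3)}-h^2}=:\mathcal{R}$ whenever $\Xi^{1/(m+3)}>h^2$ (and coverage is impossible otherwise). This isolates an SNR-driven coverage radius $\mathcal{R}$.

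Next I would fold in the FOV requirement of Lemma~2: the associated OBS is usable only if $r\leq\mathcal{T}$, and by the note there is no coverage for $r>\mathcal{T}$. Hence the coverage event is the intersection $\{r\leq\mathcal{R}\}\cap\{r\leq\mathcal{T}\}=\{r\leq\min(\mathcal{R},\mathcal{T})\}$. Because $r$ is the distance to the nearest point of $\Phi_o$, its truncated distribution (equivalently the void probability invoked in Lemma~2) gives $\mathbb{P}(r\leq\rho)=(1-e^{-\pi\lambda_o\rho^2})/U$ for any $\rho\leq R_m$. Evaluating at $\rho=\min(\mathcal{R},\mathcal{T})$ and splitting on the sign of $\mathcal{R}^2-\mathcal{T}^2=\Xi^{1/(m+3)}-h^2-\mathcal{T}^2$ reproduces the two branches of the statement: the active cap is $\mathcal{R}$ when $\Xi^{1/(m+3)}-h^2\leq\mathcal{T}^2$ and $\mathcal{T}$ otherwise.

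The only genuine subtlety, and the step I would treat most carefully, is the bookkeeping of the two simultaneous radial caps, the FOV cap $\mathcal{T}$ inherited from Lemma~2 and the SNR cap $\mathcal{R}$, and taking their minimum consistently; the degenerate sub-case $\Xi^{1/(m+3)}\leq h^2$ (zero coverage, subsumed by the note) should also be flagged. Everything past the reduction to $\mathbb{P}(r\leq\min(\mathcal{R},\mathcal{T}))$ is a direct application of the PPP void probability already established, so no additional machinery is required.
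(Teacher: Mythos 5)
Your proposal is correct and takes essentially the same route the paper implicitly follows: reduce the noise-limited coverage event to a deterministic radial threshold $r\leq\mathcal{R}$, intersect it with the FOV cap $r\leq\mathcal{T}$ from Lemma~2, and evaluate $\mathbb{P}\bigl(r\leq\min(\mathcal{R},\mathcal{T})\bigr)$ via the PPP void probability, which yields the two branches of the statement. Your explicit handling of the degenerate sub-case $\Xi^{1/(m+3)}\leq h^2$ is a minor point the paper leaves implicit, but otherwise the arguments coincide.
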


\begin{lemma}
The exact coverage probability of a typical user $\mathcal{C}_{o}$ can then be derived as follows:
\begin{align*}
\mathcal{C}_{o}=\left(\frac{1}{2}-\frac{1}{\pi} \int_0^\infty 
\frac{\mathrm{Im}[\phi_{\Omega}(\omega)]}{\omega} e^{- j \tilde\gamma^{\mathrm{vlc}} {B}_o f^2 N_o \omega} d\omega \right)\mathbb{P}(r \leq \mathcal{T}),
\end{align*}
where $\Omega=X-\tilde\gamma^{\mathrm{vlc}} \mathcal{I}_a$,
$\mathcal{I}_a=  \sum_{i \in \Phi_a \backslash B_0 } R^2_{\mathrm{pd}} P_o Z (\tilde{r}_{i}^2+h^2)^{-(m+3)} 
$, $\tilde{r}$ represents the distance between typical user and interferers,  $Z=\left(\frac{A_{\mathrm{pd}}(m+1) T(\xi) G(\xi) h^{m+1}}{2 \pi}\right)^2$, $\phi_\Omega(\omega) = \mathbb{E}[e^{-j \omega \Omega }]$ is the characterestic function (CF) of $\Omega$, and $\mathrm{Im}(\cdot)$ is the imaginary part of $\phi_\Omega(\cdot)$. 
\end{lemma}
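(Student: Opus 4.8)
The plan is to reduce the coverage event to the survival probability of a single scalar random variable $\Omega$ and then recover that probability by inverting its characteristic function. First I would note that no coverage is possible unless an OBS falls inside the receiver's FOV, i.e. unless the serving distance satisfies $r \leq \mathcal{T}$; conditioning on this event and using the null-probability computation of Lemma~2 gives the factorization $\mathcal{C}_{o} = \mathbb{P}\!\left(\gamma^{\mathrm{vlc}} > \tilde\gamma^{\mathrm{vlc}} \,\middle|\, r \leq \mathcal{T}\right)\mathbb{P}(r \leq \mathcal{T})$, which already isolates the factor $\mathbb{P}(r \leq \mathcal{T})$ appearing on the right-hand side of the statement.

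Next I would linearize the SINR test. Writing the desired signal power as $X = R^2_{\mathrm{pd}} P_o Z (r^2+h^2)^{-(m+3)}$ and recalling $\mathcal{I}_a$ and the noise floor $B_o f^2 N_o$, the denominator of $\gamma^{\mathrm{vlc}}$ is strictly positive, so the event $\gamma^{\mathrm{vlc}} > \tilde\gamma^{\mathrm{vlc}}$ is equivalent to $X - \tilde\gamma^{\mathrm{vlc}} \mathcal{I}_a > \tilde\gamma^{\mathrm{vlc}} B_o f^2 N_o$, that is, to the threshold crossing $\Omega > \tilde\gamma^{\mathrm{vlc}} B_o f^2 N_o$ with $\Omega = X - \tilde\gamma^{\mathrm{vlc}} \mathcal{I}_a$ exactly as defined. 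This turns a ratio test involving correlated random quantities into a one-dimensional tail event.

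I would then invoke the Gil-Pelaez inversion theorem on the conditional law of $\Omega$. With the characteristic function $\phi_\Omega(\omega) = \mathbb{E}[e^{-j\omega\Omega}]$, Gil-Pelaez expresses the survival probability $\mathbb{P}(\Omega > a)$ at $a = \tilde\gamma^{\mathrm{vlc}} B_o f^2 N_o$ as $\tfrac{1}{2} - \tfrac{1}{\pi}\int_0^\infty \omega^{-1}\,\mathrm{Im}\!\left[\phi_\Omega(\omega)\,e^{-j a \omega}\right]d\omega$; substituting this for the conditional coverage factor and multiplying by $\mathbb{P}(r \leq \mathcal{T})$ reproduces the claimed expression. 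I would flag the sign/conjugation bookkeeping as the one delicate point here, since the exponent convention chosen for $\phi_\Omega$ is the conjugate of the textbook statement and fixes whether the imaginary-part term enters with a plus or a minus and whether the exponential carries $+ja\omega$ or $-ja\omega$.

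The genuine obstacle is not the inversion but guaranteeing that $\phi_\Omega$ is tractable, because $\Omega$ couples the desired signal $X$ --- a function of the serving distance $r$ truncated to $[0,\mathcal{T}]$ --- with the interference $\mathcal{I}_a$, a shot-noise sum over the OBS process whose points are constrained to lie beyond the serving OBS and are therefore correlated with $X$ through $r$. The clean route is to condition on $r$, use the conditional independence of $X$ and $\mathcal{I}_a$ given $r$ to factor $\phi_\Omega(\omega) = \mathbb{E}_r\!\left[\mathbb{E}[e^{-j\omega X}\mid r]\,\mathbb{E}[e^{\,j\omega\tilde\gamma^{\mathrm{vlc}}\mathcal{I}_a}\mid r]\right]$, and evaluate the interference factor through the probability generating functional of $\Phi_o$ (the conditional Laplace transform of the aggregate interference). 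Getting this conditioning right --- rather than assuming $X$ and $\mathcal{I}_a$ are unconditionally independent --- is where the real care is needed and is precisely the content that feeds into the characteristic function used by the lemma.
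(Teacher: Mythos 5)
Your proposal is correct and follows essentially the same route as the paper: factor out $\mathbb{P}(r\leq\mathcal{T})$ via Lemma~2, linearize the SINR event to the threshold crossing $\Omega>\tilde\gamma^{\mathrm{vlc}}B_of^2N_o$, apply Gil--Pelaez inversion, and compute $\phi_\Omega$ by conditioning on $r$ and exploiting the conditional independence of $X$ and $\mathcal{I}_a$ (which is exactly the paper's stated methodology leading to $\phi_{\Omega}(\omega)=\mathbb{E}_{r}\bigl[e^{-j\omega X}\mathcal{L}_{\mathcal{I}_a|r}(-j\omega\tilde\gamma^{\mathrm{vlc}})\bigr]$). Your remark about the sign/conjugation convention in the inversion formula is a fair and worthwhile caveat, but it does not change the argument.
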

\begin{proof}
The conditional coverage probability of a typical user $\mathcal{C}_{r\leq\mathcal{T}}$ can  be derived as:
\begin{align*}
\mathcal{C}_{r\leq\mathcal{T}}&=\mathbb{P}\left(\frac{ R^2_{\mathrm{pd}} P_o Z 
(r^2+h^2)^{-m-3}}
{\sum_{i \in \Phi_o \backslash B_0 }  R^2_{\mathrm{pd}} P_o Z 
(\tilde{r}_{i}^2+h^2)^{-m-3}+{B}_o f^2  N_o}>\tilde\gamma^{\mathrm{vlc}}\right),
\\\nonumber
&
=\mathbb{P}\left( \underbrace{R^2_{\mathrm{pd}} P_o Z 
(r^2+h^2)^{-m-3}}_{X} - \tilde\gamma^{\mathrm{vlc}} \mathcal{I}_a
>\tilde\gamma^{\mathrm{vlc}} {B}_o f^2  N_o)\right),
\\\nonumber
&
\stackrel{(a)}{=}\frac{1}{2}-\frac{1}{\pi} \int_0^\infty 
\frac{\mathrm{Im}[\phi_{\Omega}(\omega)]}{\omega} e^{- j \tilde\gamma^{\mathrm{vlc}} {B}_o f^2  N_o \omega} d\omega.
\end{align*}
Note that (a) is obtained from applying Gil-Pelaez inversion theorem. The exact coverage probability in {\bf Lemma~4} can then be derived by multiplying the conditional coverage probability $\mathcal{C}_{r\leq\mathcal{T}}$ with the probability $\mathbb{P}(r \leq \mathcal{T})$ derived in {\bf Lemma~2}.
\end{proof}

To derive $\phi_\Omega(\omega)$ and $\mathcal{C}_o$, our analytical methodology is summarized herein:
\begin{enumerate}
\item Derive the distribution of the number of interferers within the FOV of the typical user.
\item Derive the conditional Laplace transform and CF of the aggregate interference ($\mathcal{I}_a$) incurred at typical user. Since $X$ and $\mathcal{I}_a$ are dependent on $r$, we compute  the  conditional CF  $\phi_{\Omega|r}(\cdot)$.
\item Determine $\phi_{\Omega}(\cdot)$ defined  as follows:
\begin{align}\label{phi}
\phi_{\Omega}(\omega)=
\mathbb{E}_{r}[\phi_{\Omega|r}(\omega)]
=\mathbb{E}_{r}\left[e^{-j \omega X} \mathcal{L}_{\mathcal{I}_a|r}({-j \omega \tilde\gamma^{\mathrm{vlc}}})\right].
\end{align}
\item Derive the coverage probability as detailed in {\bf Lemma~4}.
\end{enumerate}
The details follow in the following subsections:

\subsubsection{Distribution of the Number of Interferers}
All potential optical interferers for a typical user will be located with in the region between $r$ and $\mathcal{T}$. As such, given the Poisson distribution, the probability mass function (PMF) of the number of  interferers $N$ located within $r$ and $\mathcal{T}$ can be given as follows:
\begin{equation}\label{no}
\mathbb{P}(N=k)=\frac{e^{-\lambda_o \pi (\mathcal{T}^2 -r^2)} (\lambda_o \pi (\mathcal{T}^2-r^2))^k }{k!}.
\end{equation}

\subsubsection{Laplace Transform of Aggregate Interference}
Since the OBSs follow a homogeneous PPP, all $N$ interferers are independent and identically distributed. The Laplace Transform of $\mathcal{I}_a$, conditioned on $r$, can then be given as follows:
\begin{align}\label{Ia}
\nonumber
&\mathcal{L}_{\mathcal{I}_a|r}(s)= \sum_{k=0}^\infty \left(\mathbb{E}[\mathrm{exp}(-s \mathcal{I})|N]\right)^k \mathbb{P}(N=k)
\stackrel{(a)}{=}
\mathrm{exp}\left[\lambda_o \pi (\mathcal{T}^2-r^2)\left( \mathbb{E}[e^{-s \mathcal{I}}|N] -1\right)\right],
\nonumber\\
&
\stackrel{(b)}{=}
\mathrm{exp}\left[\lambda_o \pi (\mathcal{T}^2-r^2)\left( \mathbb{E}[e^{-s R^2_{\mathrm{pd}} P_o Z  (
\tilde{u})^{-2(m+3)}}]  -1\right)\right], 
\end{align}
where $\mathcal{I}=  R^2_{\mathrm{pd}} P_o Z (\tilde{r}^2+h^2)^{-(m+3)} $, (a) follows from the definition of the exponential function, and (b) follows by defining $\tilde{u} =\sqrt{\tilde{r}^2+h^2}$. 
Conditioned on $r$, the PDF of $\tilde{r}$ and $\tilde{u}$ can be given, respectively, as:
\begin{align*}
&f_{\tilde{r}}(\tilde{r})=\frac{2 \tilde{r}}{\mathcal{T}^2-r^2}, \quad r \leq \tilde{r} \leq \mathcal{T},
\\&
f_{\tilde{u}}(\tilde{u})=\frac{2 \tilde{u}}{\mathcal{T}^2-r^2}, \quad \sqrt{r^2+h^2} \leq \tilde{u} \leq \sqrt{\mathcal{T}^2+h^2}.
\end{align*}

Note that $0 \leq r \leq \mathcal{T} $ and $ h \leq u \leq \sqrt{\mathcal{T}^2+h^2}$. Therefore, $r \leq \tilde{r} \leq \mathcal{T} $ and $ \sqrt{r^2+h^2}\leq \tilde{u} \leq \sqrt{\mathcal{T}^2+h^2}$.
Applying the aforementioned distance distributions, a closed-form expression for $\mathcal{L}_{\mathcal{I}|r}(s)=\mathbb{E}[e^{-s R^2_{\mathrm{pd}} P_o Z   (\tilde{u})^{-2(m+3)}}]$ can be given as in the following.
\begin{lemma}[Conditional Laplace Transform of the Interference] To derive a closed form for the Laplace Transform of aggregate interference in \eqref{Ia}, we first derive $\mathcal{L}_{\mathcal{I}|r}(s)$ as follows:
\begin{align*}
\mathcal{L}_{I|r}(s)&=
\frac{2 \int_{\sqrt{r^2+h^2}}^{\sqrt{\mathcal{T}^2+h^2}} \tilde{u} e^{-s R^2_{\mathrm{pd}} P_o Z  (\tilde{u})^{-2(m+3)}} d\tilde{u}}{\mathcal{T}^2 -r^2}
=
\frac{ 
\Gamma\left(-\frac{1}{3+m}, 
\frac{Z R^2_{\mathrm{pd}} P_o s}{({h^2+\mathcal{T}^2})^{(m+3)}}, 
\frac{ZR^2_{\mathrm{pd}} P_o s}{({h^2+r^2})^{(m+3)}}\right)
}{(Z R^2_{\mathrm{pd}} P_o s)^{-\frac{1}{m+3}} (m+3)(\mathcal{T}^2-r^2)},
\end{align*}
where $\Gamma(z,x,y)=\Gamma_u(z,x)-\Gamma_u(z,y)$ is the Generalized incomplete Gamma function. 
\end{lemma}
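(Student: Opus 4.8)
The plan is to treat the statement as a definite-integral evaluation: the first equality is immediate from the definition $\mathcal{L}_{\mathcal{I}|r}(s)=\mathbb{E}[e^{-s\mathcal{I}}]$ together with $\mathcal{I}=R^2_{\mathrm{pd}} P_o Z\,\tilde{u}^{-2(m+3)}$ and the density $f_{\tilde u}(\tilde u)=\frac{2\tilde u}{\mathcal{T}^2-r^2}$ on $[\sqrt{r^2+h^2},\sqrt{\mathcal{T}^2+h^2}]$, so the real work is to put $\int_{\sqrt{r^2+h^2}}^{\sqrt{\mathcal{T}^2+h^2}} \tilde u\, e^{-a\tilde u^{-2(m+3)}}\, d\tilde u$ into closed form, where I abbreviate $a=sR^2_{\mathrm{pd}} P_o Z$. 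First I would apply the substitution $t=a\,\tilde u^{-2(m+3)}$, the natural change of variable that turns the stretched-exponential integrand into a pure $e^{-t}$ factor times a power of $t$.

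Second, I would carry out the differential bookkeeping. From $t=a\tilde u^{-2(m+3)}$ one gets $\tilde u=(a/t)^{1/(2(m+3))}$ and, after differentiating, $\tilde u\,d\tilde u=-\frac{a^{1/(m+3)}}{2(m+3)}\,t^{-1-1/(m+3)}\,dt$; the key simplification is that the exponent $2/(2(m+3))$ collapses to $\tfrac{1}{m+3}$, which is exactly the power of $a$ and the order $z=-\tfrac{1}{3+m}$ that must appear in the final answer. The integrand therefore becomes $-\frac{a^{1/(m+3)}}{2(m+3)}\,t^{z-1}e^{-t}\,dt$ with $z=-\tfrac{1}{3+m}$, matching the first argument of the generalized incomplete Gamma.

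Third, I would transform the limits. Since $t=a\tilde u^{-2(m+3)}$ is strictly decreasing in $\tilde u$, the lower endpoint $\tilde u=\sqrt{r^2+h^2}$ maps to the larger value $a/(h^2+r^2)^{m+3}$ and the upper endpoint $\tilde u=\sqrt{\mathcal{T}^2+h^2}$ maps to the smaller value $a/(h^2+\mathcal{T}^2)^{m+3}$. The orientation reversal contributes a second sign that cancels the minus from the Jacobian, leaving $\frac{a^{1/(m+3)}}{2(m+3)}\int_{a/(h^2+\mathcal{T}^2)^{m+3}}^{a/(h^2+r^2)^{m+3}} t^{z-1}e^{-t}\,dt$. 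Recognizing $\int_x^y t^{z-1}e^{-t}\,dt=\Gamma_u(z,x)-\Gamma_u(z,y)=\Gamma(z,x,y)$ with $x=a/(h^2+\mathcal{T}^2)^{m+3}$ and $y=a/(h^2+r^2)^{m+3}$ reproduces exactly the second and third arguments. Multiplying by the prefactor $2/(\mathcal{T}^2-r^2)$ cancels the $2$ against the $2(m+3)$, and writing $a^{1/(m+3)}$ as $(ZR^2_{\mathrm{pd}}P_o s)^{-1/(m+3)}$ in the denominator gives the stated form.

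I expect the main obstacle to be purely one of careful bookkeeping rather than any analytic difficulty: getting the sign and the orientation of the limits right so that the minus from $d\tilde u$ and the minus from swapping the (decreasing) limits cancel, and correctly pairing the two $t$-endpoints with the arguments of $\Gamma(z,x,y)$ in the order dictated by $\Gamma(z,x,y)=\Gamma_u(z,x)-\Gamma_u(z,y)$. A secondary point worth a remark is that the order $z=-\tfrac{1}{3+m}$ is negative, but since the transformed integral is over the finite, strictly positive interval $[x,y]$ there is no convergence issue at $t=0$, so the generalized incomplete Gamma is well defined and the manipulation is rigorous.
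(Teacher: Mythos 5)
Your proof is correct and follows the same (essentially the only natural) route the paper implicitly takes: the first equality is just $\mathbb{E}[e^{-s\mathcal{I}}]$ under the density $f_{\tilde u}(\tilde u)=\tfrac{2\tilde u}{\mathcal{T}^2-r^2}$, and the substitution $t=s R^2_{\mathrm{pd}}P_o Z\,\tilde u^{-2(m+3)}$ with the sign/orientation cancellation and the identification $\int_x^y t^{z-1}e^{-t}\,dt=\Gamma_u(z,x)-\Gamma_u(z,y)$ reproduces the stated closed form exactly, including the correct pairing of the endpoints with the arguments of $\Gamma(z,x,y)$. Your remark that the negative order $z=-\tfrac{1}{m+3}$ causes no convergence problem on the strictly positive finite interval is a worthwhile addition that the paper leaves unstated.
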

Substituting the closed-form result of $\mathcal{L}_{I|r}(s)$ in \eqref{Ia}, we obtain the closed form expression for $\mathcal{L}_{\mathcal{I}_a}(s)$ which can be  substituted in \eqref{phi} to obtain the CF of $\Omega$. After averaging over $r$, we can obtain the CF of $\Omega$   and ultimately the coverage probability using {\bf Lemma~4}. The closed-form expression for $\mathcal{L}_{I|r}(s)$ is a function of the incomplete Gamma function which is a built-in function in standard mathematical software packages such as \texttt{MATLAB} and \texttt{MATHEMATICA}. Note that due to the negative sign of the first argument of upper incomplete Gamma function, the {\em Alzer's inequality} cannot be applied.

{\bf Remark:} A   simplified  expression for $\mathcal{L}_{I|r}(s)$ can be derived using the asymptotic approximation of the incomplete Gamma function for small values of $x$ and $s$ as given below:
\begin{equation}
\lim_{x\rightarrow 0}\Gamma_u(z,x) \approx -\frac{x^z}{z}.
\end{equation}
This approximation is of interest in  our specific application as the arguments of incomplete Gamma function exist in the neighborhood of zero especially due to the factors $({h^2+r^2})^{-(m+3)}$ and ${({h^2+\mathcal{T}^2})^{-(m+3)}}$ where $m \geq 0$. The error  for real values of $z$ is on the order of $\mathcal{O}(x^{\min(z + 1, 0)})$ if $z \neq -1$ and $\mathcal{O}(\mathrm{ln}(x))$ if $z = -1$. In such scenarios, $\phi_\Omega (\omega)$ can be simplified as follows:
\begin{eqnarray}
\phi_\Omega (\omega)& = & \sum_{k=0}^{\infty} \frac{\pi \lambda_o e^{\lambda_o \pi h^2} (-1)^{\frac{(7+2m)(k+1)}{4+m}}}{(m+4)(k+1)! M^{\frac{1-mk-3k}{m+4}}} \nonumber \\
 &  & \quad \quad \Scale[1.1]{\left({\frac{\:_2F_1[1+k, \frac{(1+k)(3+m)}{m+4},k+2,g]}{g^{-1-k}} 
- \frac{\:_2F_1[1+k, \frac{(1+k)(3+m)}{m+4},k+2,n]}{n^{-1-k}} }\right)}, \nonumber
\end{eqnarray}
where $g=M h^{2(m+4)}$, $n=M (\mathcal{T}^2+h)^{(m+4)}$, and $M=\frac{\lambda_o \pi}{Z s_0}$.

\subsection{Average Achievable Spectral Efficiency}

As we can observe that the  approach for the ergodic capacity calculation in \eqref{cap01} requires another integral on top of the coverage probability expression; therefore, the approach can be computationally intensive. Instead, we resort to another approach where the spectral efficiency can be derived using the Laplace Transform of the received signal and interference power~\cite{hamdi2010useful}: 
\begin{equation}
\mathbb{E}\left[\mathrm{ln}\left(1+\frac{X}{1+Y}\right)\right]=\int_0^\infty \frac{\mathcal{L}_{Y}(s)-\mathcal{L}_{X,Y}(s)}{s}{e^{-s}} ds,
\end{equation}
where $\mathcal{L}_{Y}(s)=\mathbb{E}[e^{-s Y}]$ and $\mathcal{L}_{X,Y}(s)=\mathbb{E}[e^{-s (X+Y)}]$ is the joint MGF of $X$ and $Y$.
Now we can derive the spectral efficiency of a typical user using Hamdi's lemma for RF-only networks as follows:
\begin{equation}\label{Rs}
\mathcal{R}_s=\int_0^\infty \frac{\mathbb{E}_v\left[\mathcal{L}_{{I}|v}(s)-\mathcal{L}_{I,X|v}(s)\right]}{s}{e^{-{B}_s N_s}} ds,
\end{equation}
where $\mathcal{L}_{{I}|v}(s)$ can be given as in \eqref{Ia} and $X= K P_s v^{-\alpha} \chi_0$.
Conditioned on $v$, the variables ${I}$ and $X$ are independent, therefore  $
\mathcal{L}_{I,X|v}(s)=e^{-s X} \mathcal{L}_{I|v}({s}),
$
where  $\mathcal{L}_{X|v} (s)=(1-s \Theta P_s v^{-\alpha})^{-\kappa}$.

Subsequently, we can derive the spectral efficiency using Hamdi's lemma for VLC-only networks as:
\begin{equation}\label{Ro}
\mathcal{R}_o=\int_0^\infty \frac{\mathbb{E}_r[\mathcal{L}_{\mathcal{I}_a|r}(s)-\mathcal{L}_{\mathcal{I}_a,X|r}(s)]}{s}{e^{-{B}_o f^2  N_o s}} ds,
\end{equation}
where $\mathcal{L}_{\mathcal{I}_a|r}(s)$ can be given as in \eqref{LI}. Note that conditioned on $r$, the variables $\mathcal{I}_a$ and $X$ are independent, therefore $\mathcal{L}_{\mathcal{I}_a,X}(s)$ can be given as $
\mathcal{L}_{\mathcal{I}_a,X|r}(s)=e^{-s X} \mathcal{L}_{\mathcal{I}_a|r}({s}).
$

\section{Rate and Coverage Probability Analysis in Co-existing RF/VLC Networks}

In this section, we derive the coverage probability  of a typical user in two scenarios, (i) {\em Hybrid RF/VLC:} a co-existing RF/VLC network where a user can communicate to both RF and VLC networks at the same time  and (ii) {\em Opportunistic RF/VLC:} a co-existing RF/VLC network where a user opportunistically selects the network with maximum received signal power.

\subsection{Coverage With Hybrid RF/VLC communication}
In a hybrid RF/VLC network, the rate outage will happen if and only if both RF and VLC link rate to the typical user goes below the target rate threshold $R_{\mathrm{th}}$. Subsequently, given the rate coverage probabilities of a typical user in  RF-only and VLC-only networks as in \eqref{rfcov} and {\bf Lemma~4}, respectively, we can derive the coverage probability of a typical user with hybrid RF/VLC communications as
$
\mathcal{C}_{h}= 1-(1-\mathcal{C}_o)(1-\mathcal{C}_s).
$
This network can also be considered as a VLC prioritized network where a user may first try to associate to a VLC network and switches to an RF network only in case of rate outage.  In such a case, an outage will occur only if user will not get rate coverage from any of the RF or VLC network.

\subsection{Coverage With Opportunistic RF/VLC communication}

We  first derive the association probability of a typical user with RF and VLC networks and then characterize the coverage probability of the typical user.
Since the typical user can associate to either network of OBSs or SBSs, from the law of total probability, the coverage probability can be given as:
\begin{equation}
\mathcal{C}=\mathbb{P}_o \tilde{\mathcal{C}}_{o} +(1-\mathbb{P}_o) \tilde{\mathcal{C}}_{s},
\end{equation}
where $\mathbb{P}_o$ is the  association probability, $\tilde{\mathcal{C}}_{o}$ and $\tilde{\mathcal{C}}_{s}$ are the coverage probabilities of
a typical user when associated with the tier of OBSs and SBSs, respectively. Note that $\tilde{\mathcal{C}}_{o}$ and $\tilde{\mathcal{C}}_{s}$ are different than ${\mathcal{C}}_{o}$ and ${\mathcal{C}}_{s}$ since the opportunistic selection of the network changes the distribution of the distance between user and the associated OBS or SBS. The details will follow in the subsequent subsections.

\subsubsection{Association Probability}

Given the maximum received signal power criterion, the association probability of a typical user to the OBS $\mathbb{P}_o$ depends on the probability of two events: (i) the probability that the received signal power from an OBS is higher than the received signal power from all SBSs, (ii)~the probability that OBS  is located within the FOV of the typical user. As such, the association probability of a typical user can be given as in the following.
\begin{lemma}[Association Probability of a Typical User to OBS] Conditioned on \eqref{radi}, the probability that a typical user associates to the OBS, $\mathbb{P}_o$ can be derived as follows:
\begin{equation}
\mathbb{P}_o=\frac{2 \pi \lambda_o}{U}\int_{0}^{\mathcal{T}} e^{-\lambda_s \pi Z_1 \left(r^2+h^2 \right)^{\frac{2(m+3)}{\alpha}} -\lambda_o \pi r^2} r  dr,
\end{equation}
where $Z_1=\left(\frac{ P_s K}{Z R^2_{\mathrm{pd}}P_o}\right)^{\frac{2}{\alpha}}$ and $Z=\left(\frac{A(m+1) T(\xi) G(\xi) h^{m+1}}{2 \pi}\right)^2$.
\end{lemma}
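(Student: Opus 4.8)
The plan is to decompose $\mathbb{P}_o$ into the two events identified in the lemma and to exploit two monotonicity facts. The VLC received power $R^2_{\mathrm{pd}} P_o Z (r^2+h^2)^{-(m+3)}$ is deterministic (no multipath fading in VLC) and strictly decreasing in the horizontal distance $r$, while the mean RF received power $P_s K v^{-\alpha}$ is strictly decreasing in the SBS distance $v$. Consequently the strongest optical signal is always that of the \emph{nearest} OBS, and among the OBSs lying inside the FOV the strongest admissible signal is the nearest one provided it obeys $r\le\mathcal{T}$ (by \textbf{Lemma~2}); likewise the strongest RF signal comes from the nearest SBS. First I would therefore condition on the nearest OBS being at horizontal distance $r$, whose law is the truncated Rayleigh density $f_r(r)=2\pi\lambda_o r e^{-\pi\lambda_o r^2}/U$ already used in the VLC-only analysis.

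Next I would translate the maximum-received-power criterion into a constraint on the nearest SBS. Given the nearest OBS at horizontal distance $r$, the user associates to the optical tier exactly when the VLC signal dominates the strongest RF signal, i.e.
\begin{equation*}
R^2_{\mathrm{pd}} P_o Z \left(r^2+h^2\right)^{-(m+3)} > P_s K v^{-\alpha}.
\end{equation*}
Solving for $v$ produces the protection radius $v_0(r)=\left(\frac{P_s K}{Z R^2_{\mathrm{pd}} P_o}\right)^{1/\alpha}\left(r^2+h^2\right)^{(m+3)/\alpha}$, so that OBS association holds if and only if no SBS falls inside the disk of radius $v_0(r)$. Its squared value is precisely $v_0(r)^2 = Z_1\left(r^2+h^2\right)^{\frac{2(m+3)}{\alpha}}$ with $Z_1$ as defined in the lemma, and the probability of this event is the void probability of the homogeneous SBS PPP $\Phi_s$ over that disk, namely $\exp\!\left(-\pi\lambda_s Z_1\left(r^2+h^2\right)^{\frac{2(m+3)}{\alpha}}\right)$.

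Finally I would remove the conditioning by integrating the product of the nearest-OBS density and this void probability over the admissible range of $r$. Because optical association additionally requires the serving OBS to lie within the FOV, the integration is truncated at $\mathcal{T}=\min(R_m,\hat{\mathcal{T}})$ rather than extended to $R_m$, giving
\begin{equation*}
\mathbb{P}_o = \int_0^{\mathcal{T}} \frac{2\pi\lambda_o r e^{-\pi\lambda_o r^2}}{U}\, e^{-\pi\lambda_s Z_1\left(r^2+h^2\right)^{\frac{2(m+3)}{\alpha}}}\, dr,
\end{equation*}
which is the claimed formula after pulling the constant $2\pi\lambda_o/U$ outside the integral. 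The main obstacle is not the computation but the careful justification of these reductions: that averaging out the Nakagami fading is the appropriate notion of RF received power for cell selection (hence no $\kappa,\Theta$ appears), that the FOV restriction correctly caps the integral at $\mathcal{T}$ in a way consistent with, and not double counting, the event $r\le\mathcal{T}$ of \textbf{Lemma~2}, and that over the finite region the void probability may be taken over the full disk of radius $v_0(r)$ under the standing assumption that $R_m$ is large and $U\approx 1$.
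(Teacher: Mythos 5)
Your proposal is correct and follows essentially the same route as the paper: condition on the nearest OBS at horizontal distance $r$ with the truncated Rayleigh density, convert the power comparison $R^2_{\mathrm{pd}}P_o Z (r^2+h^2)^{-(m+3)} > P_s K v^{-\alpha}$ into a protection radius $v_0(r)=\sqrt{Z_1}\,(r^2+h^2)^{(m+3)/\alpha}$, apply the void probability of $\Phi_s$ over that disk, and integrate $r$ up to $\mathcal{T}$. Your explicit remarks on treating the RF power without the fading term and on the FOV truncation are points the paper leaves implicit, but they do not change the argument.
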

\begin{proof}
A typical user will associate to either OBS or SBS depending on the maximum received signal power. As such, the association probability with OBS can be given as:
\begin{align}
\mathbb{P}_o=& \mathbb{E}_{r}\left[\mathbb{P}\left[P^{\mathrm{vlc}} (r) >  P^{\mathrm{rf}} (v)\right]\right] 
\nonumber\\
=& \mathbb{E}_{r}\left[\mathbb{P}\left[{v}  > \left(\frac{P_s K  (r^2+h^2)^{{m+3}}}{Z R^2_{\mathrm{pd}}P_o} \right)^{\frac{1}{\alpha}}\right]\right],
\end{align}
where $Z=\left(\frac{A(m+1) T(\xi) G(\xi) h^{m+1}}{2 \pi}\right)^2$. Using the  null probability of a 2-D Poisson process with density $\lambda_s$ in an area $A$, $\mathbb{P}(v \geq x)= \mathbb{P}[\mathrm{No\:BS\: closer\: than\: x \:in \:SBS\: tier}]$ can be derived as 
$\mathbb{P}(v \geq x)= \mathrm{exp}(-\lambda_s |A|)=\mathrm{exp}(-\lambda_s \pi x^2)
$.
Subsequently,
$\mathbb{P}_o$ can be derived as: 
\begin{align}
\mathbb{P}_o=& \mathbb{E}_{r}\left[\mathrm{exp}\left(-\lambda_s \pi \left(\frac{P_s K  (r^2+h^2)^{{m+3}}}{Z R^2_{\mathrm{pd}}P_o} \right)^{\frac{2}{\alpha}}\right)\right],
\end{align}
Averaging over  the  PDF of $r$ given   as $
f_{r}(r)={2 \pi \lambda_o r \mathrm{exp}(-\lambda_o \pi r^2)}/U
$, {\bf Lemma~6} can be obtained. 
\end{proof}
The integral in {\bf Lemma~6} can be solved in closed-form for certain cases as shown below. 
\begin{corollary}[Closed Form Association Probability with OBS, $m+3=\alpha$] Given $m+3=\alpha$, the closed form association probability of a typical user with OBS can be derived as:
\begin{equation}
\mathbb{P}_o= 
\mathrm{exp}\left({h^2 \pi \lambda_o+\frac{Z_2^2}{\pi}}\right)
Z_2
\left[\mathrm{erf}\left(\frac{Z_2 (1+Z_3 h^2)}{\sqrt{\pi}} \right)-
\mathrm{erf}\left(\frac{Z_2 (1+Z_3 h^2(1+\mathrm{tan}^2(\xi_{\mathrm{fov}})))}{\sqrt{\pi}} \right)\right], \nonumber
\end{equation}
where
$Z_2=\frac{\pi  \lambda_o}{2 \sqrt{Z_1 \lambda_m}}$, $Z_3=\frac{2 \lambda_m Z_1}{\lambda_o}$, and $\xi_{\mathrm{fov}} \leq 90^\circ$.
\end{corollary}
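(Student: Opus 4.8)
The plan is to specialize the single integral of \textbf{Lemma~6} to the regime $m+3=\alpha$ and reduce it to a finite Gaussian integral whose primitive is the error function. The key observation is that the exponent $\frac{2(m+3)}{\alpha}$ carried by $(r^2+h^2)$ inside the integrand collapses to exactly $2$ when $m+3=\alpha$. Hence the first exponential factor in \textbf{Lemma~6} becomes $\exp\bigl(-\lambda_s\pi Z_1 (r^2+h^2)^2\bigr)$, and the whole integrand reads $\frac{2\pi\lambda_o}{U}\,r\,\exp\bigl(-\lambda_s\pi Z_1(r^2+h^2)^2-\lambda_o\pi r^2\bigr)$.

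First I would perform the substitution $w=r^2+h^2$, so that $r\,dr=\tfrac12\,dw$ and the limits $r\in[0,\mathcal{T}]$ map to $w\in[h^2,\,\mathcal{T}^2+h^2]$. Since $\xi_{\mathrm{fov}}\le 90^\circ$ gives $\mathcal{T}=h\tan(\xi_{\mathrm{fov}})$, the upper limit becomes $\mathcal{T}^2+h^2=h^2\sec^2(\xi_{\mathrm{fov}})=h^2\bigl(1+\tan^2(\xi_{\mathrm{fov}})\bigr)$. After pulling out the constant factor $e^{\lambda_o\pi h^2}$, the problem reduces to evaluating
\begin{equation*}
\frac{\pi\lambda_o e^{\lambda_o\pi h^2}}{U}\int_{h^2}^{h^2(1+\tan^2(\xi_{\mathrm{fov}}))} e^{-\lambda_s\pi Z_1 w^2-\lambda_o\pi w}\,dw ,
\end{equation*}
which is a one-dimensional Gaussian integral in $w$ over a finite interval.

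Next I would complete the square in the quadratic exponent $-\lambda_s\pi Z_1 w^2-\lambda_o\pi w$. This produces the multiplicative constant $\exp\bigl(\tfrac{(\lambda_o\pi)^2}{4\lambda_s\pi Z_1}\bigr)=\exp(Z_2^2/\pi)$, with $Z_2=\frac{\pi\lambda_o}{2\sqrt{Z_1\lambda_s}}$, and leaves $\int \exp\bigl(-\lambda_s\pi Z_1(w+\tfrac{\lambda_o}{2\lambda_s Z_1})^2\bigr)\,dw$. A linear change of variable turns this into $\int e^{-t^2}\,dt$, i.e.\ a scaled difference of $\mathrm{erf}$ evaluated at the two endpoints. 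Collecting the constants, the prefactor $\frac{\pi\lambda_o}{U}\cdot\frac{\sqrt\pi}{2\sqrt{\lambda_s\pi Z_1}}$ simplifies (with $U\approx 1$) to exactly $Z_2$, and factoring the shift $\tfrac{\lambda_o}{2\lambda_s Z_1}$ out of each $\mathrm{erf}$ argument produces the factor $(1+Z_3 w)$ with $Z_3=\frac{2\lambda_s Z_1}{\lambda_o}$; substituting $w=h^2$ and $w=h^2(1+\tan^2(\xi_{\mathrm{fov}}))$ then yields precisely the two arguments stated in \textbf{Corollary~1}.

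I expect the main obstacle to be conceptual rather than computational: the closed form exists \emph{only} because $m+3=\alpha$ forces the power on $(r^2+h^2)$ to equal $2$, so that the $w$-substitution produces a genuine Gaussian with an error-function primitive; for any other ratio the integrand carries a non-integer power of $w$ and no elementary antiderivative is available. The remaining care is bookkeeping: tracking the accumulated constants through the completion of the square so that they coalesce into the compact symbols $Z_2$ and $Z_3$, and fixing the sign and ordering of the two error-function terms arising from the finite limits of integration.
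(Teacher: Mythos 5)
Your derivation is correct and is the natural (indeed the only) route to the stated closed form: the paper states Corollary~1 without an explicit proof, and your steps --- the observation that $m+3=\alpha$ collapses the exponent $\tfrac{2(m+3)}{\alpha}$ to $2$, the substitution $w=r^2+h^2$ with $\mathcal{T}^2+h^2=h^2(1+\tan^2(\xi_{\mathrm{fov}}))$, and completion of the square --- reproduce the prefactor $Z_2\,e^{h^2\pi\lambda_o+Z_2^2/\pi}$ and both erf arguments exactly (reading the paper's $\lambda_m$ as $\lambda_s$ and taking $U\approx 1$). The only discrepancy is the one you already flag: since the upper limit of the $w$-integral exceeds the lower one and $\mathrm{erf}$ is increasing, the bracket must be $\mathrm{erf}$ at the upper endpoint minus $\mathrm{erf}$ at the lower endpoint, so the two erf terms in the paper's statement are written in reversed order (as printed, $\mathbb{P}_o$ would be negative).
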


{\bf Remark:} In order to prioritize VLC over RF network, the association to VLC networks can be maximized using the aforementioned expressions and optimal system parameters can be determined numerically on standard mathematical software packages such as \texttt{MAPLE} and \texttt{MATHEMATICA}.

The association probability can be further simplified using  an approximation for $\mathrm{erf}(\cdot)$ as follows:
$
\mathrm{erf}{(x)}\approx \sqrt{1-\mathrm{exp}\left(-x^2 \frac{\frac{4}{\pi} +a x^2}{1+ a x^2}\right)},
$
where $a=\frac{8 (\pi -3)}{3 \pi (4-\pi)}\approx 0.140012$.
This approximation is used since it is designed to be very accurate in a neighborhood of 0 and a neighborhood of infinity, and the error is less than 0.00035 for all $x$. Using the alternate value $a=0.147$ the maximum error reduces to about 0.00012.

The association probability to a certain network also provide direct insights into the  mean traffic load associated to each network. For instance, the mean traffic load at RF and VLC network can  be given as $\lambda_u(1-\mathbb{P}_{o})$ and $\lambda_u \mathbb{P}_{o}$, respectively. In order to distribute the traffic load in RF and VLC network according to the choice of network operator, we can select the system parameters (such as intensity of OBSs $\lambda_o$, intensity of SBSs $\lambda_s$, received power from OBSs) that satisfy $\mathbb{P}_o=\beta$, where $0 \leq \beta \leq 1$ is the proportion of users in VLC network. By numerically solving $\mathbb{P}_o=\beta$, the desired system parameters can be obtained.

\begin{corollary}[Closed-Form Solution for System Parameters for Required Traffic Offloading]  Using a more tractable approximation of the $\mathrm{erf}(\cdot)$, i.e.,
\begin{equation}\label{approx2}
\mathrm{erf}(x) \approx \frac{1}{\mathrm{exp}(x^2) \sqrt{x^2}},
\end{equation}
in Corollary~1 and considering $\xi_{\mathrm{fov}}=90^\circ$ closed-form solution for $Z_1$, $\lambda_o$ and $\lambda_s$ for required traffic offloading can be determined, respectively, as follows:
\begin{equation*}
Z_1=\frac{-\lambda_o \pi h^2 + 2 \mathcal{W} (e^{\frac{\lambda_o \pi h^2}{2}}\frac{\lambda_o \pi h^2}{2 \beta})}{2 h^4 \pi \lambda_s},\quad
\lambda_s=\frac{-\lambda_o \pi h^2 + 2 \mathcal{W} (e^{\frac{\lambda_o \pi h^2}{2}}\frac{\lambda_o \pi h^2}{2 \beta})}{2 h^4 \pi Z_1}, \quad \mbox{and} \nonumber
\end{equation*}
\begin{equation}
\lambda_o=\frac{2 \beta e^{h^4 \pi Z_1 \lambda_s} h^2  Z_1 \lambda_s}{1-\beta e^{h^4 \pi Z_1 \lambda_s} }, \nonumber
\end{equation}
where $\mathcal{W}(\cdot)$ is the Lambert-W function.
\end{corollary}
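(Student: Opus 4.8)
The plan is to take the association probability $\mathbb{P}_o$ from \textbf{Corollary~1} (valid under $m+3=\alpha$, with $U\approx1$) and push it through two successive reductions before inverting for each design parameter. First I would specialize to $\xi_{\mathrm{fov}}=90^\circ$. Since $\tan^2(\xi_{\mathrm{fov}})\to\infty$, the argument of the second error function in \textbf{Corollary~1} diverges and that term tends to $1$, so the bracketed difference collapses to the complementary error function $1-\mathrm{erf}(x_0)$ of the single finite argument $x_0=\frac{Z_2(1+Z_3h^2)}{\sqrt{\pi}}$. This leaves
\begin{equation*}
\mathbb{P}_o = Z_2\,e^{\lambda_o\pi h^2+Z_2^2/\pi}\,\big(1-\mathrm{erf}(x_0)\big).
\end{equation*}

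Next I would apply the tail approximation \eqref{approx2} to $1-\mathrm{erf}(x_0)$, replacing it by its closed-form surrogate $\frac{e^{-x_0^2}}{\sqrt{\pi}\,x_0}$ so that every transcendental factor becomes elementary. The crucial step is then to substitute $Z_2=\frac{\pi\lambda_o}{2\sqrt{Z_1\lambda_s}}$ and $Z_3=\frac{2\lambda_s Z_1}{\lambda_o}$ and simplify. Here the two identities $Z_2^2Z_3=\frac{\pi^2\lambda_o}{2}$ and $Z_2^2Z_3^2=\pi^2\lambda_s Z_1$ carry the argument: the exponent $\lambda_o\pi h^2+\frac{Z_2^2}{\pi}\big(1-(1+Z_3h^2)^2\big)$ telescopes to $-\pi\lambda_s Z_1 h^4$, while the prefactor $\frac{Z_2}{\sqrt{\pi}\,x_0}$ reduces to $\frac{1}{1+Z_3h^2}=\frac{\lambda_o}{\lambda_o+2\lambda_s Z_1 h^2}$, yielding
\begin{equation*}
\mathbb{P}_o \approx \frac{\lambda_o}{\lambda_o+2\lambda_s Z_1 h^2}\,e^{-\pi\lambda_s Z_1 h^4}.
\end{equation*}
I expect this algebraic collapse to be the main obstacle, since it is the one place where the special exponent $m+3=\alpha$ (which turned the integrand into a Gaussian in $r^2+h^2$) and the precise definitions of $Z_2,Z_3$ must be tracked so that the quadratic exponent and the rational prefactor simplify together rather than leaving residual cross terms.

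Finally I would set $\mathbb{P}_o=\beta$ and invert for one parameter at a time. For $\lambda_o$ the equation is rational---$\lambda_o$ appears linearly in both numerator and denominator---so clearing denominators and collecting terms gives $\lambda_o=\frac{2\beta\,e^{\pi\lambda_s Z_1 h^4}\,\lambda_s Z_1 h^2}{1-\beta\,e^{\pi\lambda_s Z_1 h^4}}$ with no special function. For $Z_1$ and $\lambda_s$ the inversion is transcendental, because each sits simultaneously inside the exponential and inside the rational prefactor through the product $\lambda_s Z_1$. The plan is to introduce $w=\pi h^4\lambda_s Z_1+\frac{\lambda_o\pi h^2}{2}$; under this grouping one verifies $\lambda_o+2\lambda_s Z_1 h^2=\frac{2w}{\pi h^2}$ and $\pi\lambda_s Z_1 h^4=w-\frac{\pi\lambda_o h^2}{2}$, so that the equation becomes $\beta=\frac{\lambda_o\pi h^2\,e^{\pi\lambda_o h^2/2}}{2\,w\,e^{w}}$. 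This is exactly the canonical form $w\,e^{w}=e^{\lambda_o\pi h^2/2}\frac{\lambda_o\pi h^2}{2\beta}$, giving $w=\mathcal{W}\!\big(e^{\lambda_o\pi h^2/2}\frac{\lambda_o\pi h^2}{2\beta}\big)$; back-substitution then yields $Z_1=\frac{2w-\lambda_o\pi h^2}{2\pi h^4\lambda_s}$ and, by the $\lambda_s\leftrightarrow Z_1$ symmetry of the product, the stated expression for $\lambda_s$. The only subtlety to flag is the choice of grouping in $w$: it must be selected precisely so that the lone exponential $e^{-\pi\lambda_s Z_1 h^4}$ and the prefactor denominator fuse into a single $w\,e^{w}$, which is what makes the inversion closed-form rather than merely implicit.
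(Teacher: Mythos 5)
Your derivation is correct and follows exactly the route the corollary itself prescribes (the paper gives no separate proof): specialize Corollary~1 at $\xi_{\mathrm{fov}}=90^\circ$, replace the surviving $\mathrm{erfc}$ term by its Gaussian-tail surrogate, observe that the exponent telescopes to $-\pi\lambda_s Z_1 h^4$ and the prefactor to $\lambda_o/(\lambda_o+2\lambda_s Z_1 h^2)$, and invert $\mathbb{P}_o=\beta$ rationally for $\lambda_o$ and via the substitution $w=\pi h^4\lambda_s Z_1+\tfrac{\lambda_o\pi h^2}{2}$ (yielding $we^{w}=e^{\lambda_o\pi h^2/2}\tfrac{\lambda_o\pi h^2}{2\beta}$) for $Z_1$ and $\lambda_s$. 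Note that reproducing the stated formulas requires your two silent repairs of the source expressions --- reading the erf difference in Corollary~1 in the order $\mathrm{erf}(x_1)-\mathrm{erf}(x_0)$ so that it tends to $1-\mathrm{erf}(x_0)$, and including the $1/\sqrt{\pi}$ in the tail approximation that \eqref{approx2} omits --- both of which are confirmed by consistency with the prefactor in Corollary~3.
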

Note that $Z_1$ is a function of various network parameters of both RF and VLC networks such as transmit power of SBSs $P_s$ and OBSs $P_o$, respectively. Also, the parameters like $m$ which sets the LED illumination angle can be designed using $Z_1$ accordingly.

Since VLC networks are expected to be denser than RF networks over indoor environments, it is interesting to determine an approximation of the association probability when $\lambda_s \rightarrow 0$. Given this condition, we  show in the following that the network parameters can be optimized to prioritize VLC networks over RF network by maximizing the association probability.
\begin{corollary}[Asymptotic Association Probability, $\lambda_s\rightarrow 0$] The asymptotic approximation of the association probability when $\lambda_s\rightarrow 0$ can be given using \eqref{approx2} as follows:
\begin{equation}
\mathbb{P}_o=\frac{\lambda_o}{\lambda_o+2 h^2 Z_1 \lambda_s}-\frac{\lambda_o e^{-\pi \lambda^2_o \mathcal{T}^2}}{\lambda_o+2 h^2 Z_1 \lambda_s}.
\end{equation}
The closed-form optimal solution (to maximize the association probability) for $Z_1$  is:
\begin{equation}
Z_1^*=\frac{\lambda_o \mathrm{cos}(\xi_{\mathrm{fov}}) (1+e^{0.5 \pi \lambda_o \mathcal{T}^2} \mathrm{cos}(\xi_{\mathrm{fov}}))}{2 h^2 \lambda_s (\mathrm{cos}(\xi_{\mathrm{fov}})+e^{0.5 \pi \lambda_o \mathcal{T}^2})}.
\end{equation}
Also, the traffic load can be distributed by considering $\mathbb{P}_o=\beta$ and selecting the system parameters as:
\begin{equation}
\lambda_s=\frac{\lambda_o -\beta \lambda_o-\lambda_o e^{-\mathcal{T}^2 \pi \lambda_o^2}}{2 h^2 Z_1 \beta} \qquad \text{and} \qquad Z_1=\frac{\lambda_o -\beta \lambda_o-\lambda_o e^{-\mathcal{T}^2 \pi \lambda_o^2}}{2 h^2 \lambda_s \beta}.
\end{equation}
\end{corollary}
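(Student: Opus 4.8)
The plan is to obtain all three claims from the finite-integral expression for $\mathbb{P}_o$ in \textbf{Lemma~6}, specialized to the analytically tractable case $m+3=\alpha$ of \textbf{Corollary~1}, and then to push the resulting error-function expression through the approximation \eqref{approx2} in the asymptotic regime $\lambda_s\to 0$. Starting from $\mathbb{P}_o=Z_2\,e^{h^2\pi\lambda_o+Z_2^2/\pi}\big[\mathrm{erf}(A_1)-\mathrm{erf}(A_2)\big]$ with $Z_2=\tfrac{\pi\lambda_o}{2\sqrt{Z_1\lambda_s}}$ and the two error-function arguments $A_1,A_2$, I would substitute $\mathrm{erf}(x)\approx \tfrac{1}{\exp(x^2)\sqrt{x^2}}$, turning each term into $\tfrac{Z_2}{A_i}\exp\!\big(h^2\pi\lambda_o+\tfrac{Z_2^2}{\pi}-A_i^2\big)$. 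The crux is the observation that $A_i^2$ contains exactly the divergent piece $\tfrac{\pi\lambda_o^2}{4Z_1\lambda_s}=\tfrac{Z_2^2}{\pi}$; the $\lambda_s^{-1}$-divergent exponentials in the prefactor and in $e^{-A_i^2}$ therefore cancel, leaving only bounded factors. The surviving ratio $Z_2/A_i$ collapses to $\tfrac{\lambda_o}{\lambda_o+2h^2Z_1\lambda_s}$, while from $A_2$ the coverage-radius exponential is produced after using $h^2\tan^2\xi_{\mathrm{fov}}=\mathcal{T}^2$; letting $\lambda_s\to0$ in the remaining higher-order exponents $e^{-\pi\lambda_s Z_1 h^4}$ and $e^{-\pi\lambda_s Z_1(h^2+\mathcal{T}^2)^2}$ then yields the stated form. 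A useful consistency check is that as $\lambda_s\to 0$ the expression must reduce to $\mathbb{P}(r\le\mathcal{T})=1-e^{-\pi\lambda_o\mathcal{T}^2}$ of \textbf{Lemma~2}, which pins down the surviving exponential.

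For the optimal parameter $Z_1^*$, I would maximize the error-function form obtained after \eqref{approx2} but \emph{before} the final collapse, because there $Z_1$ enters non-monotonically---through $Z_2\propto Z_1^{-1/2}$, through $Z_3\propto Z_1$, and inside both error functions---so an interior maximizer exists, whereas the fully collapsed expression is monotone decreasing in $Z_1$. Setting $\partial\mathbb{P}_o/\partial Z_1=0$ and simplifying the stationarity condition with $\mathcal{T}^2=h^2\tan^2\xi_{\mathrm{fov}}$ and $1+\tan^2\xi_{\mathrm{fov}}=\sec^2\xi_{\mathrm{fov}}$ produces the $\cos\xi_{\mathrm{fov}}$ factors and the term $e^{0.5\pi\lambda_o\mathcal{T}^2}$ of $Z_1^*$. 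The two traffic-offloading identities are then immediate: imposing $\mathbb{P}_o=\beta$ on the compact asymptotic form and clearing the denominator $\lambda_o+2h^2Z_1\lambda_s$ gives an equation that is linear separately in $\lambda_s$ and in $Z_1$, and solving each yields the stated closed forms.

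The main obstacle is the asymptotic bookkeeping in the first step: one must verify that the $\lambda_s^{-1}$-order exponentials cancel \emph{exactly}, so that no divergence survives, and then consistently retain $\lambda_s$ in the denominator $\lambda_o+2h^2Z_1\lambda_s$ while dropping it from the higher-order exponents. Tracking matching orders in $\lambda_s$ simultaneously across the prefactor, the two coarse error-function approximations, and the ratios $Z_2/A_i$ is the delicate part; once this is handled, the optimization reduces to a single-variable stationarity computation and the offloading step to elementary algebra.
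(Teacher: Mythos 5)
Your route is exactly the one the paper implies (the corollary is stated there without an explicit proof): substitute the approximation \eqref{approx2} into the error-function form of Corollary~1, verify that the $\lambda_s^{-1}$-order exponents $Z_2^2/\pi$ and $A_i^2$ cancel so that only the rational factors $\lambda_o/(\lambda_o+2h^2Z_1\lambda_s)$ and $\lambda_o/(\lambda_o+2(h^2+\mathcal{T}^2)Z_1\lambda_s)$ together with $e^{-\pi\lambda_o\mathcal{T}^2}$ survive, optimize the pre-collapse two-denominator form over $Z_1$ (your observation that the fully collapsed expression is monotone in $Z_1$ and therefore cannot be the object being maximized is the essential and correct insight), and obtain the offloading identities by linearly solving $\mathbb{P}_o=\beta$. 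Your consistency check against Lemma~2 is sound and in fact flags discrepancies that lie in the paper's stated formulas rather than in your argument --- the exponent should read $-\pi\lambda_o\mathcal{T}^2$ rather than $-\pi\lambda_o^2\mathcal{T}^2$, and carrying out the stationarity computation with $h^2/(h^2+\mathcal{T}^2)=\cos^2\xi_{\mathrm{fov}}$ yields the factors $1-e^{0.5\pi\lambda_o\mathcal{T}^2}\cos\xi_{\mathrm{fov}}$ and $e^{0.5\pi\lambda_o\mathcal{T}^2}-\cos\xi_{\mathrm{fov}}$ (the positive root) rather than the plus signs printed in the corollary.
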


\subsubsection{Distance Distributions}
In order to derive $\tilde{\mathcal{C}}_{o}$ and $\tilde{\mathcal{C}}_{s}$, we first derive the distribution of the distance of the selected OBS or SBS from the typical user, denoted by $X_o$ and $X_s$, respectively. Note that this distance is different than the nearest distance of a BS in a specific tier as it depends on the association criterion. The relevant distance distributions can then be given as in the following.

\begin{lemma}[Distributions of $X_o$ and $X_s$]
Conditioned on the fact that user selects OBS, the PDF $f_{X_o}(x)$ of the distance $X_o$ between a typical user and its serving OBS is given by
\begin{equation}
f_{X_o}(x_o)=\frac{2 \pi \lambda_o x_o}{\mathbb{P}_o}   e^{-\lambda_s \pi Z_1 \left(x_o^2+h^2 \right)^{\frac{2(m+3)}{\alpha}} -\lambda_o \pi x_o^2}.
\end{equation}
Similarly, if the typical user associates to an SBS,  the PDF $f_{X_s}(x)$ of the distance $X_s$ between a typical user and its serving SBS can be derived as in the following:
\begin{equation}
f_{X_s}(x_s)=
\frac{2 \pi \lambda_s x_s} {1-\mathbb{P}_o} 
\left(\mathbb{P}( r \leq \mathcal{T})e^{-\lambda_o \pi \left({\left({x_s Z_1^{-1/2}}\right)^{\frac{\alpha}{m+3}}-h^2} \right)-\lambda_s \pi x_s^2} + (1-\mathbb{P}( r \leq \mathcal{T}))e^{-\lambda_s \pi x_s^2}\right).
\end{equation}
\end{lemma}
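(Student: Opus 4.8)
The plan is to obtain each conditional distance density as the density of the \emph{nearest} base station in the corresponding tier, reweighted by the event that drives the association to that tier, and then renormalized by the association probability so that it integrates to one. The two enabling facts are that (i) received power is monotone in distance within each tier, so the serving BS of the chosen network is always the nearest one (and, for an OBS, the nearest one within FOV), and (ii) $\Phi_o$ and $\Phi_s$ are independent, so conditioning on the serving distance in one tier leaves the other tier Poisson with unchanged intensity. For $f_{X_o}$ I would start from the nearest-OBS law $2\pi\lambda_o r\,e^{-\lambda_o\pi r^2}$ (truncated, with $U\approx1$), multiply by the probability $e^{-\lambda_s\pi Z_1(r^2+h^2)^{2(m+3)/\alpha}}$ that every SBS is weaker than the OBS at distance $r$ — which is exactly the SBS void probability already computed inside {\bf Lemma~6} — restrict to $r\le\mathcal{T}$, and divide by $\mathbb{P}_o$. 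Since the integrand of {\bf Lemma~6} is precisely the joint density of the event that the user associates to an OBS whose nearest distance is $r$, normalizing it by $\mathbb{P}_o$ returns $f_{X_o}$ at once.

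For $f_{X_s}$ the serving SBS is the nearest SBS, with law $2\pi\lambda_s x_s\,e^{-\lambda_s\pi x_s^2}$, so it remains to compute $\mathbb{P}(\text{SBS association}\mid v=x_s)$ and multiply. I would split this probability according to {\bf Lemma~2} into the complementary events ``at least one OBS lies in the FOV'' (probability $\mathbb{P}(r\le\mathcal{T})$) and ``no OBS lies in the FOV'' (probability $1-\mathbb{P}(r\le\mathcal{T})$). In the latter VLC is unavailable and the SBS is selected with certainty, contributing $(1-\mathbb{P}(r\le\mathcal{T}))\,e^{-\lambda_s\pi x_s^2}$ after multiplying by the nearest-SBS density. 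In the former the SBS is chosen iff its received power beats that of the nearest (hence strongest) OBS; inverting $P_sKx_s^{-\alpha}>ZR_{\mathrm{pd}}^2P_o(r^2+h^2)^{-(m+3)}$ yields the threshold $r^2>\tau(x_s)$ with $\tau(x_s)=(x_sZ_1^{-1/2})^{\alpha/(m+3)}-h^2$, and the probability that no OBS lies closer than $\sqrt{\tau(x_s)}$ is the OBS void probability $e^{-\lambda_o\pi\tau(x_s)}$. Collecting the two contributions, multiplying by $2\pi\lambda_s x_s\,e^{-\lambda_s\pi x_s^2}$, and dividing by $1-\mathbb{P}_o$ reproduces the stated $f_{X_s}$.

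The main obstacle is the coupling between the FOV constraint $r\le\mathcal{T}$ and the power-domination constraint $r>\sqrt{\tau(x_s)}$: both restrict the \emph{same} nearest-OBS distance, so an exact evaluation of the ``SBS wins while an OBS is in the FOV'' probability equals $e^{-\lambda_o\pi\min(\tau(x_s),\mathcal{T}^2)}$ and carries a case split according to whether $\sqrt{\tau(x_s)}\le\mathcal{T}$ or not (plus a cap at probability one when $\tau(x_s)\le0$, i.e. when the SBS is so close that no OBS can beat it). The compact single-expression form in the statement is recovered by treating the FOV-occupancy event and the power-domination event as independent thinnings of $\Phi_o$ — that is, by weighting $e^{-\lambda_o\pi\tau(x_s)}$ by $\mathbb{P}(r\le\mathcal{T})$ rather than conditioning on it — which is the modeling simplification I would state explicitly. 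As a final sanity check I would confirm that each density integrates to one, equivalently that the two numerators integrate to $\mathbb{P}_o$ and $1-\mathbb{P}_o$, which validates the normalizations against {\bf Lemma~6}.
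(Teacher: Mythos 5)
Your proposal is correct and follows essentially the same route as the paper: condition on the serving distance in one tier, invoke the void probability of the other tier's PPP (and of the OBS process for the power-domination event), split the SBS case according to whether an OBS lies within the FOV, and normalize by the association probability (the paper works through the CCDF and differentiates, which is equivalent to your direct density construction). You also correctly flag the coupling between the FOV constraint $r\le\mathcal{T}$ and the power-domination constraint on the same nearest-OBS distance — the paper's proof passes over this silently, and its final expression indeed weights the OBS void probability by $\mathbb{P}(r\le\mathcal{T})$ as if the two events were independent thinnings, so your explicit acknowledgment of that simplification is, if anything, more careful than the original.
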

\begin{proof}
Given the typical user associated to OBS, the PDF of $X_o$ can be derived as follows:
\begin{align*}
\mathbb{P}(X_o\geq x_o)&=\mathbb{P}(r\geq x_o, r \leq \mathcal{T}| {\mathrm{OBS \quad selected}})
=\frac{\mathbb{P}(r\geq x_o, r \leq \mathcal{T},{\mathrm{OBS \quad selected}})}{\mathbb{P}_o}
\\&=\frac{\mathbb{P}(r\geq x_o,  r \leq \mathcal{T}, P^{\mathrm{vlc}} (r) >  P^{\mathrm{rf}} (u))}{\mathbb{P}_o},
\\&=
\frac{1}{{\mathbb{P}_o}}{\int_{x_o}^{\mathcal{T}} \mathrm{exp}\left(-\lambda_s \pi \left(\frac{P_s K  (r^2+h^2)^{{m+3}}}{Z R^2_{\mathrm{pd}}P_o} \right)^{\frac{2}{\alpha}}\right) f_{r}(r) dr}.
\end{align*}
The CDF of $X_o$ can thus be given as $F_{X_o}(x)=1-\mathbb{P}(X_o\geq x_o)$. Subsequently, the PDF can be derived by differentiating the CDF as given in {\bf Lemma~7}.
Now, given the typical user associated to an SBS, there are two possibilities, (i) at least one OBS exists with in FOV, i.e., $r \leq \mathcal{T}$ OR (ii)  all OBSs exist outside FOV, i.e., $r > \mathcal{T}$. The distance distribution  of $X_s$ can be derived as follows:
\begin{align}\label{step1}
\mathbb{P}(X_s\geq x_s)&=\mathbb{P}(\{v\geq x_s,r \leq \mathcal{T} \bigcup v\geq x_s,r > \mathcal{T}\}|\mathrm{SBS\quad selected}),
\\&=\frac{\mathbb{P}(v\geq x_s, r \leq \mathcal{T}, P^{\mathrm{rf}} (v) >  P^{\mathrm{vlc}} (r))+\mathbb{P}(v\geq x_s, r > \mathcal{T}, P^{\mathrm{rf}} (v) > 0 )}{1-\mathbb{P}_o}.
\end{align}
Note that $P^{\mathrm{vlc}} (r) >0$ iff $r \leq \mathcal{T}$, therefore  $\mathbb{P}(P^{\mathrm{rf}} (v) > P^{\mathrm{vlc}} (r) )$ can be given as follows:
\begin{align}
\mathbb{P}(P^{\mathrm{rf}} (v) > P^{\mathrm{vlc}} (r) )=& 
\mathbb{P}\left(r > \sqrt{\left(v Z_1^{-1/2}  \right)^{\frac{\alpha}{m+3}}-h^2} =\mathcal{G}(v)\right)
=
{e}^{-\lambda_o \pi (\mathcal{G}(v))^2},  \mathcal{G}(v) \leq \mathcal{T}
\end{align}
The condition $\mathcal{G}(v) \leq \mathcal{T}$ occurs only when $v \leq \sqrt{Z_1} (\mathcal{T}^2 +h^2)^{\frac{m+3}{\alpha}}=\mathcal{G}_1$. Consequently,
\begin{equation}\label{step2}
\mathbb{P}(v\geq x_s, r \leq \mathcal{T}, P^{\mathrm{rf}} (v) >  P^{\mathrm{vlc}} (r)) = 
\int_{x_s}^{\mathrm{min}(R_m,\mathcal{G}_1)}
{e}^{-\lambda_o \pi (\mathcal{G}(v))^2} 
f_{v}(v) dv,\quad   \mathcal{G}(v) \leq \mathcal{T}.
\end{equation}
When $r > \mathcal{T}$, all OBSs exist outside FOV. That is $\mathbb{P}(P^{\mathrm{rf}} (v) > 0)=1$.
$\mathbb{P}(v\geq x_s, r > \mathcal{T}, P^{\mathrm{rf}} (v) > 0)$, can then be given as follows:
\begin{equation} \label{step3}
\mathbb{P}(v\geq x_s, r > \mathcal{T}, P^{\mathrm{rf}} (v) > 0)=
\int_{x_s}^{R_m}
f_{v}(v) dv.
\end{equation}
Substituting \eqref{step2} and \eqref{step3} in \eqref{step1} and then differentiating over $v$, $f_{X_s}(x)$ can be given as in {\bf Lemma~7}.
\end{proof}

\subsubsection{Coverage Probability} 
The coverage probabilities $\tilde{\mathcal{C}}_o$ and $\tilde{\mathcal{C}}_s$ can thus be calculated by averaging over  the distance distributions $f_{X_o}(x)$ in \eqref{phi} and $f_{X_s}(x)$ in \eqref{cov1} instead of $f_r(r)$ and $f_v(v)$, respectively.

\subsubsection{Average Spectral Efficiency} 
The average spectral efficiency of a typical user with opportunistic RF/VLC scheme can be given as follows:
\begin{equation}
\mathcal{R}_H=\mathbb{P}_o \mathcal{R}_o+(1-\mathbb{P}_o)\mathcal{R}_s,
\end{equation}
where $\mathcal{R}_s$ and $\mathcal{R}_o$ can be given using \eqref{Rs} and \eqref{Ro} with the modification of taking expectation over $X_s$ and $X_o$ instead of $V$ and $R$, respectively.
 
\begin{table}[t]
\centering
\caption{Main parameters}
     \begin{tabular}{ | l  | p{2cm} | }
     \hline
		\hline
\textbf{Variable} & \textbf{Value} 
\\
\hline
Transmit optical power per OBS $(P_{\mathrm{opt}})$  & 10~W
\\
\hline
Modulation bandwidth for OBS $(B_o)$ &  40~MHz
\\
\hline
Physical area of PD $(A_{\mathrm{pd}})$ & 0.0001~$m^2$
\\
\hline
Target rate $(R_{\mathrm{th}})$ & 3~bps
\\
\hline
Intensity of SBSs $(\lambda_s)$ & 5
\\
\hline
Gain of optical filter $(T(\xi))$ & 1
\\
\hline
RF path-loss exponent $(\alpha)$ & 3.68
\\
\hline
Refractive index $(n)$ & 1.5
\\
\hline
Optical to electric conversion efficiency $(R_{\mathrm{pd}})$ & 0.53~A/W
\\
\hline
Noise power spectral density $(N_o)$ & $10^{-21}$~A$^2$/Hz
\\
\hline
RF bandwidth $(B_s)$ & 10~MHz
\\
\hline
RF transmit power $(P_s)$ & 2~W
\\
\hline
Optical-to-electrical conversion ratio $(f)$ & 3
\\
\hline
Height of the room $(h)$ & 2~m
\\
\hline
\hline
\end{tabular}
\end{table}

\section{Numerical Results and Discussions}
In this section, we analyze the impact of system parameters such as intensity of OBSs and the FOV of the PD receiver on  the PMF of the number of interferers, distribution of the aggregate interference, and coverage probability of a typical user in various network settings. We comparatively analyze various network configurations such as RF-only, VLC-only, opportunistic RF/VLC, and hybrid RF/VLC networks. Simulation parameters are listed in Table~I.

\subsection{VLC-Only Network Configuration}

\subsubsection{PMF of the Number of Interferers}
\figref{1} demonstrates the impact of the FOV of the PD receiver on the PMF of the number of potential interferers considering $\lambda_o|A|=30$. The analytical results closely follow the histogram generated by the Monte-Carlo simulations. It can be seen that only a small increase in  the FOV  (i.e., $\xi_{\mathrm{fov}}=70^\circ$ to $\xi_{\mathrm{fov}}=80^\circ$) significantly shifts the mean of the distribution which represents the average number of interferers inside the FOV of the PD receiver. Moreover, with an increase in intensity of OBSs, the number of interferers inside the FOV of the PD receiver increases significantly as shown in \figref{2} where cumulative distribution is plotted for various intensities of OBSs. The Monte-Carlo simulations verify the theoretical results in \eqref{no}.

\begin{figure}[h]
\centering
\begin{minipage}{0.48\textwidth}
\centering
\includegraphics[scale=0.5]{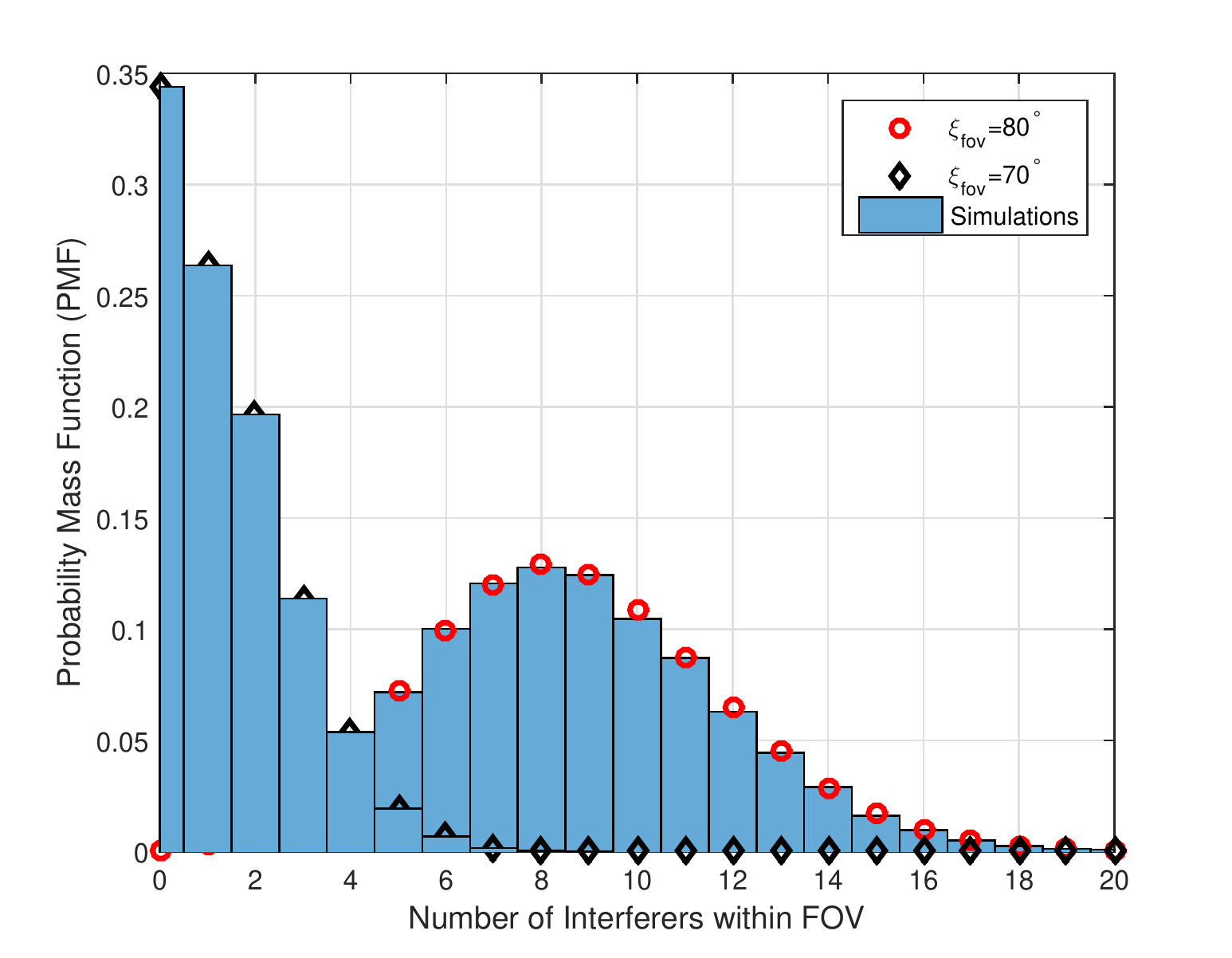}
\caption{PDF of the number of interferers existing within the FOV of the PD receiver mounted at the typical user as a function of the FOV of the PD receiver (for $\lambda_o|A|=30$).}
\label{1}
\end{minipage}
\hfill
\begin{minipage}{0.48\textwidth}
\centering
\includegraphics[scale=0.5]{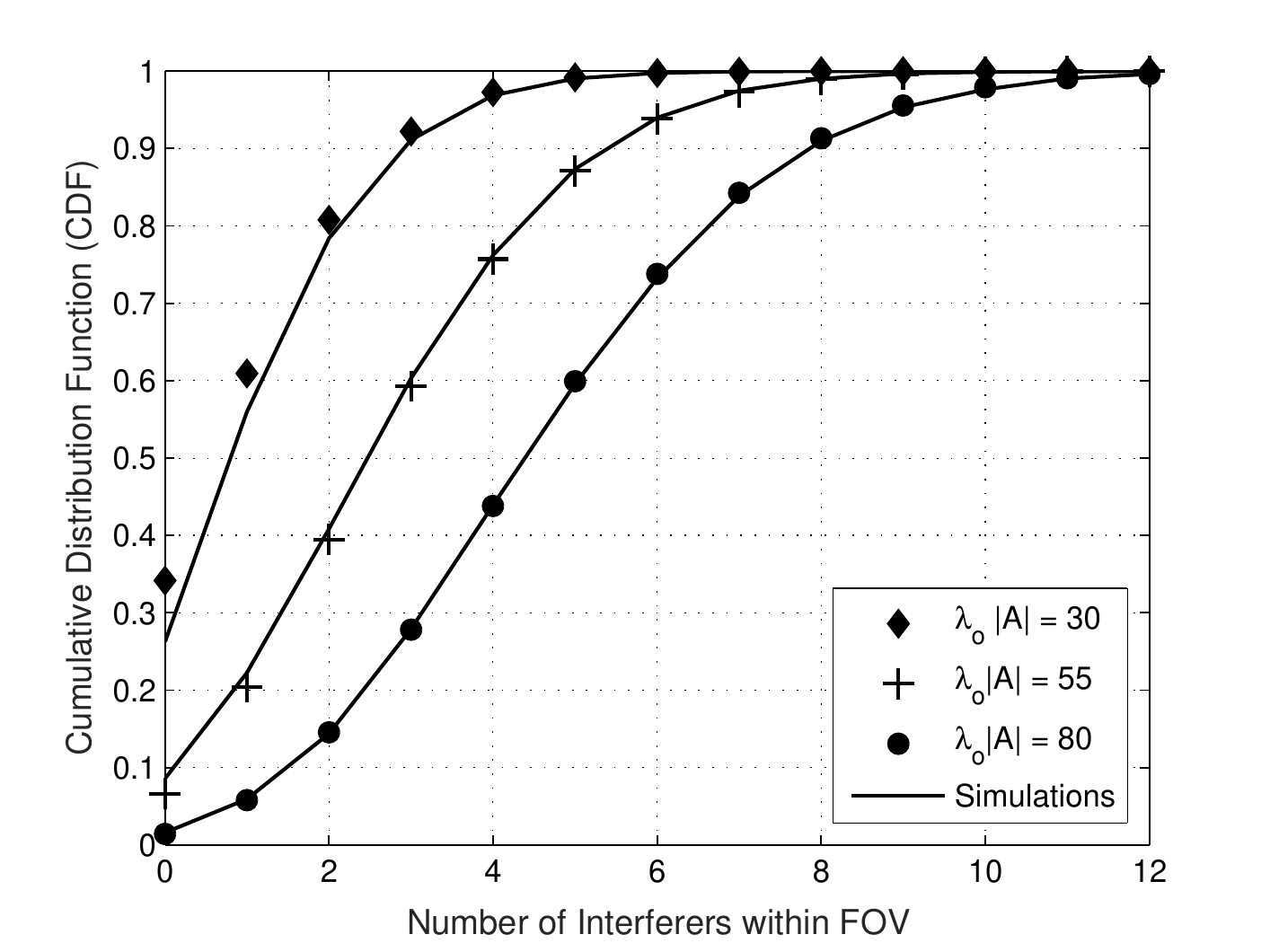}
\caption{CDF of the number of interferers  within the FOV of the PD receiver mounted at the typical user as a function of the intensity of OBSs (for $\xi_{fov}=70^\circ$).}
\label{2}
\end{minipage}

\end{figure}

\subsubsection{Laplace Transform of Aggregate Interference}

\figref{BL} demonstrates the impact of the FOV of the PD receiver on the Laplace Transform of the cumulative interference considering $\lambda_o|A|=80$. This plot shows the unconditional Laplace Transform, i.e., $\mathbb{E}_r\left[\mathcal{L}_{I|r}(t)\right] \mathbb{P}(r \leq \mathcal{T})+\mathbb{P}(r \geq \mathcal{T})$ where $f_r(r)$ is given by the Rayleigh distribution. The first term is dominant in scenarios when there is a high probability of an interferer to exist within the FOV of the receiver (e.g., large FOV or intensity of OBSs). The second term is dominant for scenarios when there is no  interferer, i.e., zero interference which makes the Laplace Transform unity. By definition, the value of the Laplace Transform $\mathcal{L}_{I}(s)=\mathbb{E}(e^{-s I})$ of a random variable reduces with increasing $s$. That is, as $s \rightarrow \infty $, $\mathcal{L}_{I}(s)\rightarrow 0$. 
Subsequently,  $I$ is also inversely proportional to $\mathcal{L}_{I}(s)$. Therefore, it can be noted that the cumulative interference increases significantly with increasing FOV. For low values of FOV, it can be seen that the aggregate interference becomes nearly constant after a certain value of $s$. The reason is that, in such cases, the probability of no interferer to exist within the FOV of the PD receiver $\mathbb{P}(r\leq \mathcal{T})$ is high which is independent of $s$ and in such a case $\mathcal{L}_{I}(s)$ is unity. 
\begin{figure}[ht]
\centering
\includegraphics[scale=0.6]{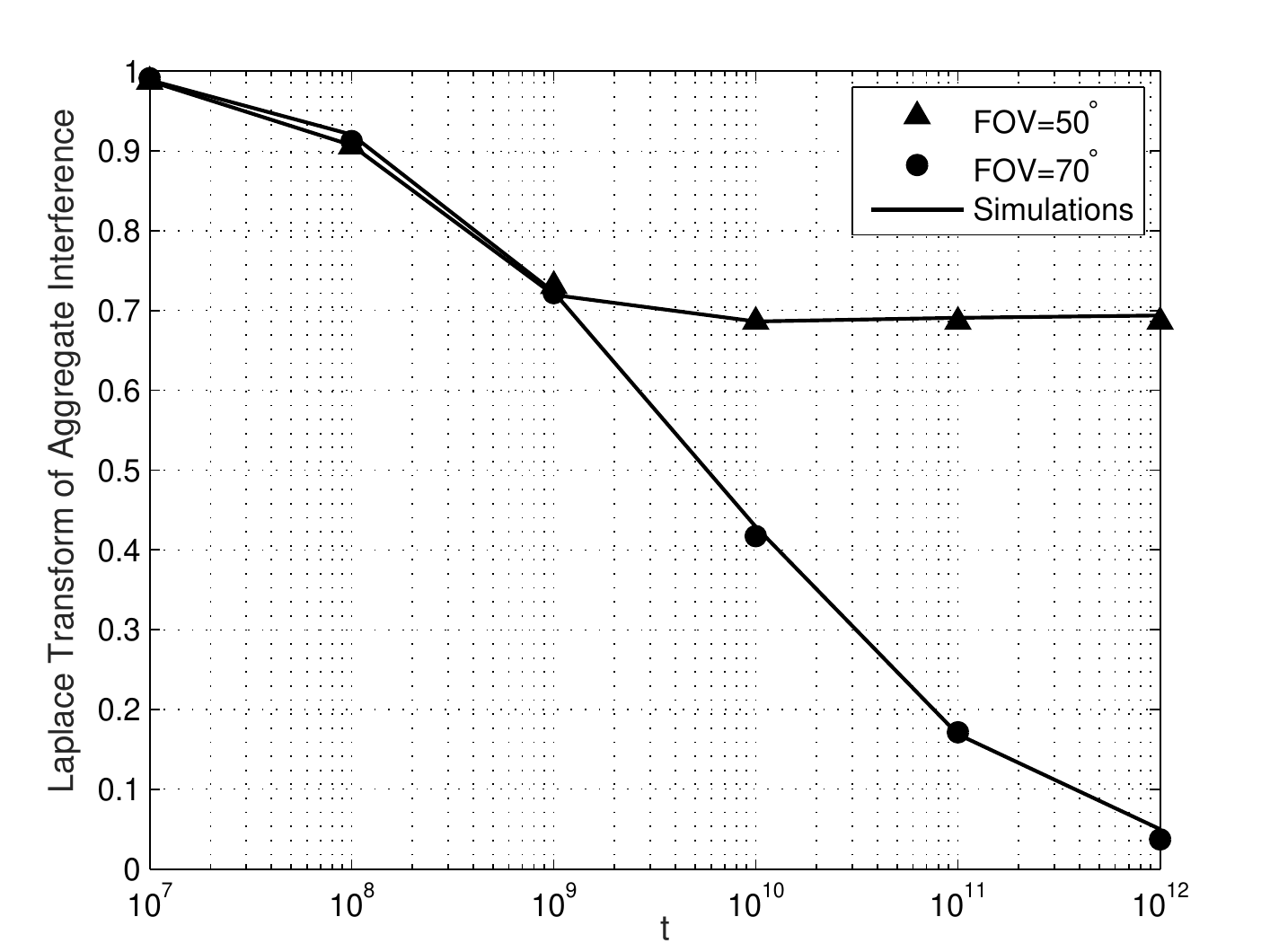}
\caption{Laplace Transform of the aggregate interference incurred at a  typical user as a function of the FOV of  PD receiver (for $\lambda_o |A|$ = 80).}
\label{BL}
\end{figure}

\subsubsection{Coverage Probability of a Typical User}
\figref{BL1} depicts the coverage probability of a typical user in VLC-only network as a function of the FOV of the PD receiver. It can be seen that as FOV increases, the coverage probability continues to increase since there is a high probability of detecting and associating to an OBS inside FOV. Although the interference also increases with the increase in FOV, the performance gain due to higher association and transmission probability dominates the performance loss due to higher interference. Moreover, it can also be observed that the coverage probability significantly varies as a function of the FOV and the height of OBSs. That is, the higher the  deployment height of OBSs, the larger is the coverage probability. The reason is evident from the condition derived  in {\bf Lemma~2} which shows that the probability of getting an OBS within the FOV of the PD receiver is a function of both height and FOV. That is, either increasing FOV  or height will allow more OBSs within the FOV of the receiver which ultimately enhances the coverage. 
Nonetheless, for a given FOV, increasing height beyond a certain threshold may not be beneficial as is depicted in the  next figure. 
\begin{figure}[h]
\centering
\begin{minipage}{0.48\textwidth}
\centering
\includegraphics[scale=0.6]{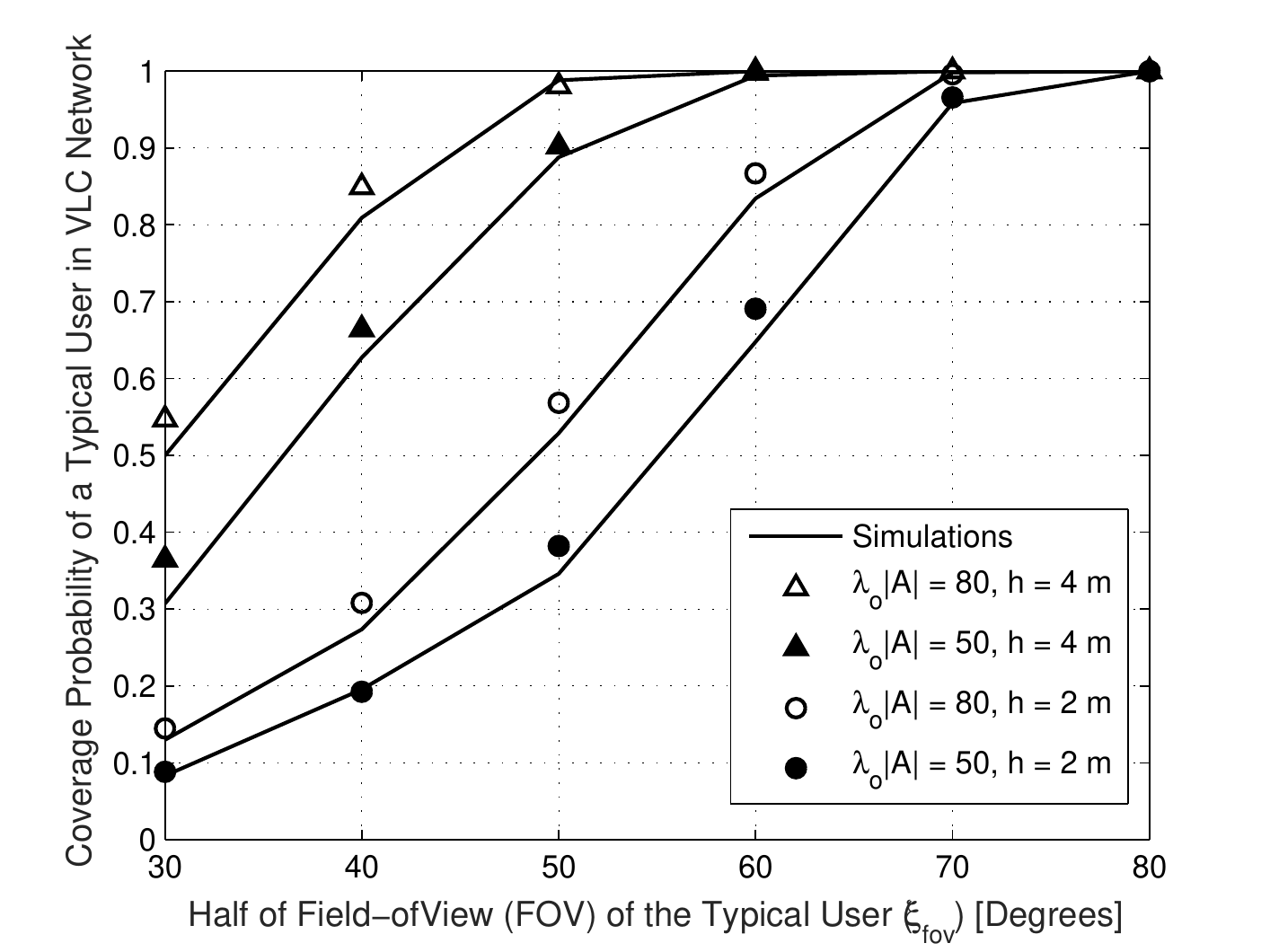}
\caption{Coverage probability of a typical user as a function of the FOV of  PD receiver, height and intensity of OBSs.}
\label{BL1}
\end{minipage}
\hfill
\begin{minipage}{0.48\textwidth}
\centering
\includegraphics[scale=0.6]{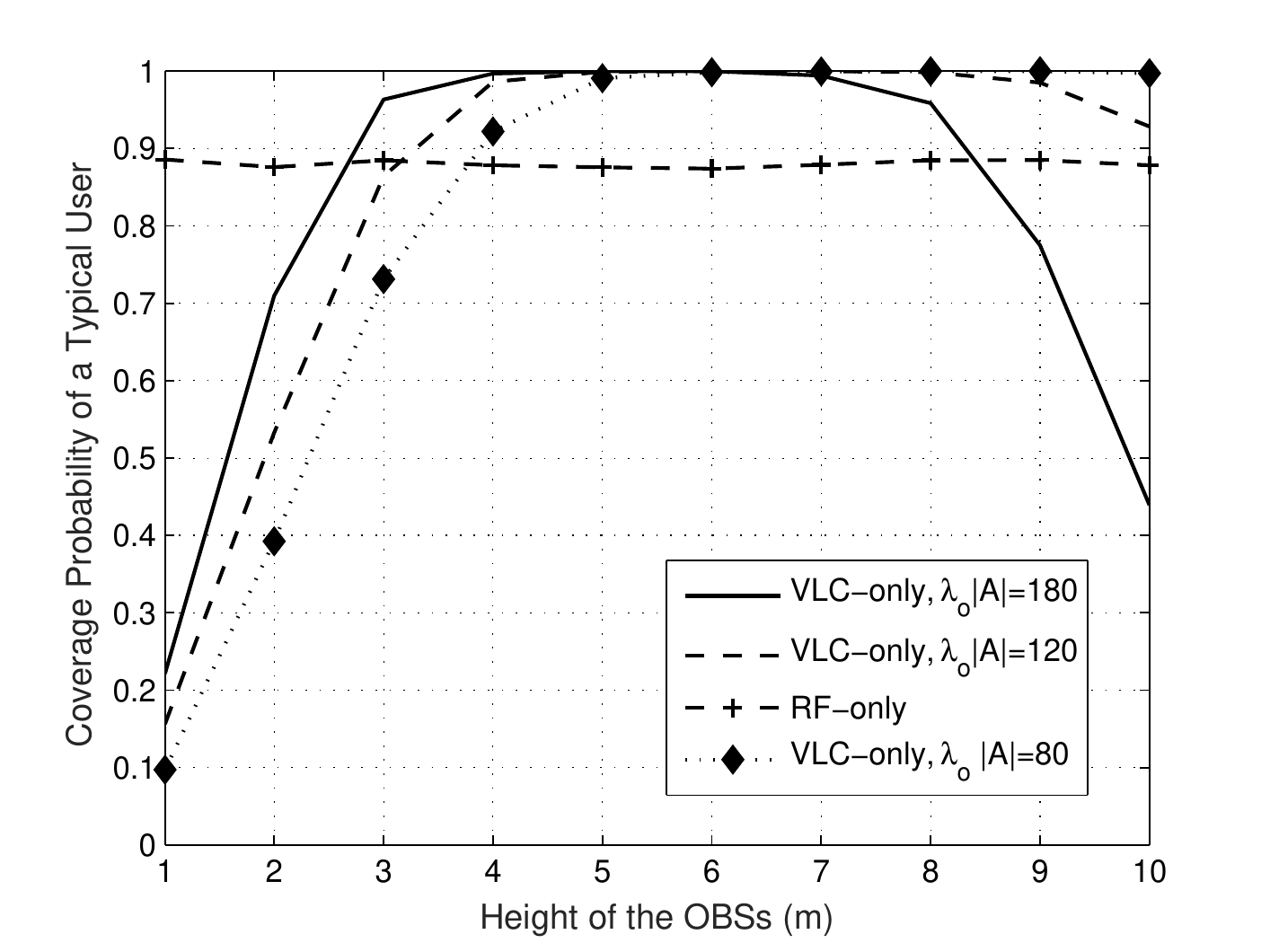}
\caption{Coverage probability of a typical user as a function of the height and intensity of OBSs (for $\xi_{\mathrm{fov}}=45^\circ$, $\alpha=3.68$).}
\label{height1}
\end{minipage}
\end{figure}

\figref{height1} depicts the coverage probability of a typical user as a function of the height and intensity of OBSs. It can be seen that as the height increases, the coverage probability first increases due to higher chances of getting an OBS within FOV of PD receiver and thus higher probability of association/transmission. However, beyond a certain height, the path-loss degradation due to increasing distance becomes more dominant leading to reduction in coverage. Also, the number of OBSs within FOV also becomes too high to make  interference more dominant that results in coverage reduction. 
Also, we note that the range of the values of room height that maximizes coverage is a function of the intensity of OBSs. For example, the higher intensity of OBSs reduces the range of heights at which coverage can be maximized whereas for sparse OBSs the optimal coverage can be achieved for a wide range of heights. The reason is that, increasing the intensity of OBSs allows interference  to begin dominate rapidly.  Finally, it can also be observed that the VLC network outperforms the RF network only for a certain range of deployment heights or intensities. Therefore, it is crucial to select the correct intensity for a given deployment height of OBSs in order to harness the gains of VLC over RF networks.

\subsection{Opportunistic RF/VLC Network Configuration}

\subsubsection{Probability of Association to an OBS}
\begin{figure}[h]
\centering
\begin{minipage}{0.48\textwidth}
\centering
\includegraphics[scale=0.6]{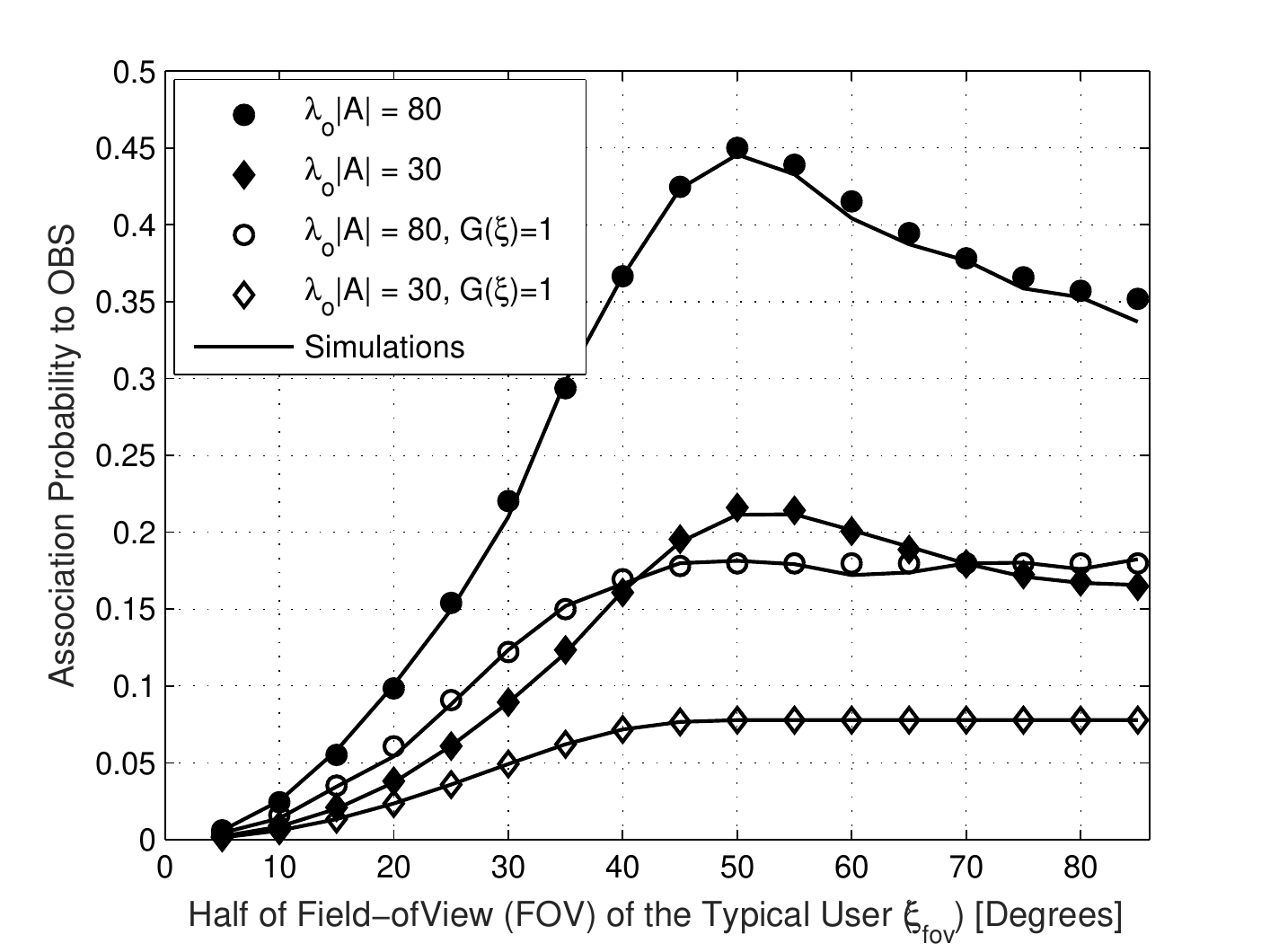}
\caption{Association probability of a typical user with OBS as a function of field-of-view (FOV) of the optical receiver (for $\alpha=3.68$).
}
\label{atto1}
\end{minipage}
\hfill
\begin{minipage}{0.48\textwidth}
\centering
\includegraphics[scale=0.5]{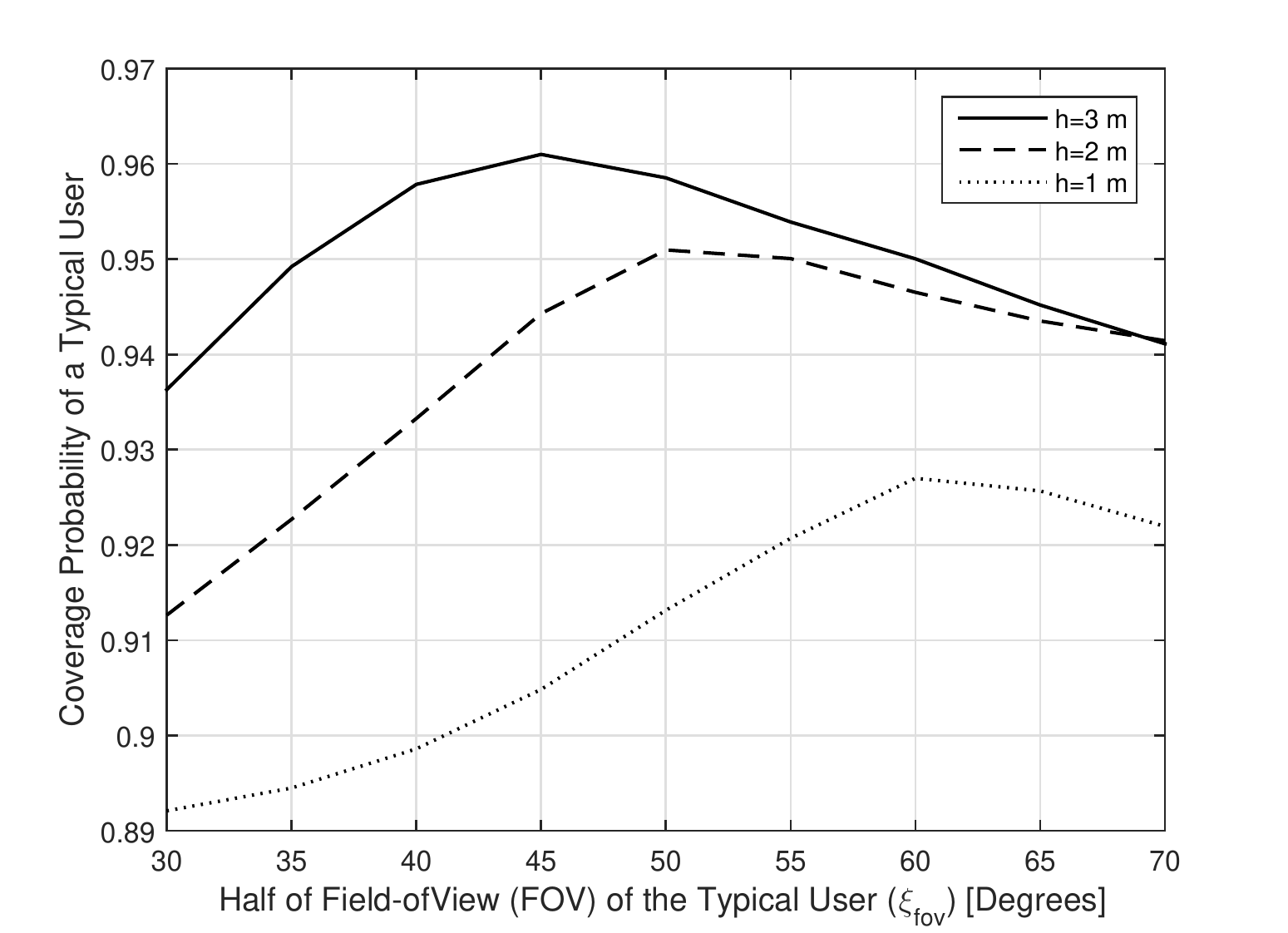}
\caption{Coverage probability of a typical user with OBS as a function of field-of-view (FOV) of the optical receiver and height of OBSs (for $\alpha=3.68$).
}
\label{atto2}
\end{minipage}

\end{figure}
\figref{atto1} depicts the association probability of a typical user with OBSs as a function of the FOV of its PD receiver. It can be observed that, as the FOV increases, the association to OBSs increases. However, this increase may not be significant if the intensity of OBSs is smaller as can be seen for the case $\lambda_o|A|=30$. Moreover, it can also be noted that beyond a certain FOV, the association probability to OBSs starts to decrease. This effect is observed mainly due to the  gain of the optical concentrator $G(\xi)$ which varies with the FOV of the PD receiver. However, if $G(\xi)=1$ as is considered in state-of-the-art performance analysis studies \cite{chen34downlink}, the decaying trend cannot be observed. The probability of association to the OBSs gives direct insight into the traffic load at each network since the traffic load at RF and VLC networks, respectively, can be written as $\lambda_u(1-\mathbb{P}_{o})$ and $\lambda_u \mathbb{P}_{o}$. 

\subsubsection{Coverage Probability}
\figref{atto2} depicts the coverage probability of a typical user in co-existing RF/VLC  networks with opportunistic network selection as a function of the FOV of the typical user. The opportunistic selection, i.e., based on maximum received signal power, is more suitable for scenarios where the interference effects are not dominant. For example, as the the value of FOV increases from low to moderate,
opportunistic selection increases the coverage probability. 
Nonetheless, beyond a certain FOV, the interference in VLC network becomes dominant and the coverage probability starts to reduce with the opportunistic selection. The reason is that the opportunistic selection is not an interference-aware scheme.

Interestingly this coverage reduction (with increasing FOV) cannot be observed in VLC-only network. The reason is that, at higher values of FOV,  the benefits from reduction of transmission outage events outweigh the coverage degradation due to increased interference. Note that transmission outages in co-existing RF/VLC networks are significantly low  due to the presence of RF network. Therefore, the coverage degradation can be observed for higher values of FOV. Moreover, as also depicted, in VLC-only network,  the optimal FOV tends to decrease as the height of OBSs increases since a larger height invites more interference and higher path-loss.

\begin{figure}
\centering
\includegraphics[scale=0.6]{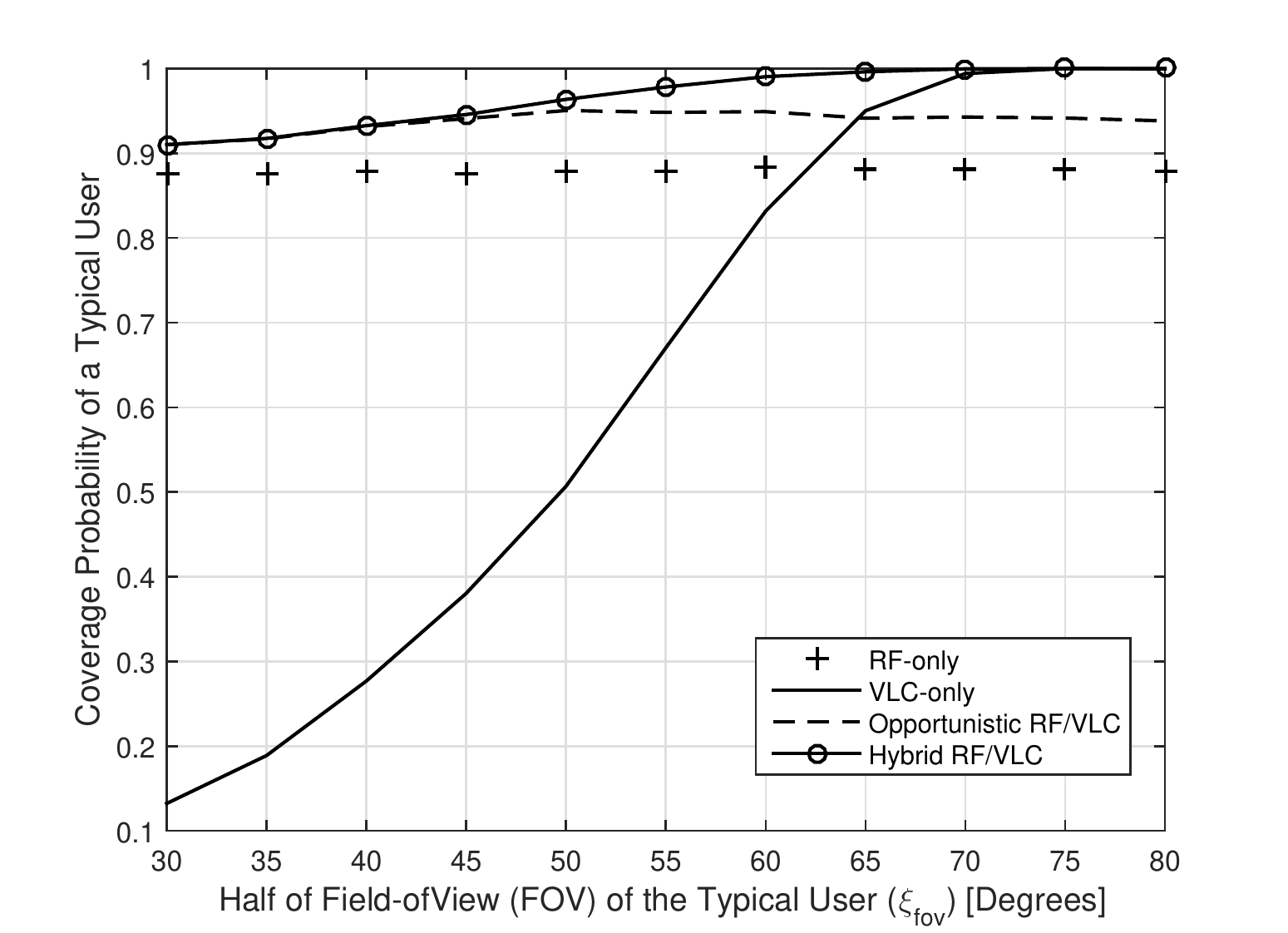}
\caption{Coverage probability of a typical user with OBS as a function of field-of-view (FOV) of the optical receiver and height of OBSs (for $\alpha=3.68$).
}
\label{atto3}
\end{figure}
\subsection{Comparative Analysis: All Configurations}
\figref{atto3} compares the performance of various network configurations, i.e., RF-only, VLC-only, opportunistic RF/VLC, and hybrid RF/VLC.
We assume that each user has multi-homing capability that allows four modes of data transmission. As expected, the hybrid scheme outperforms all other schemes at the expense of extra resources. On the other hand, opportunistic RF/VLC tends to outperform the isolated VLC networks especially in scenarios with low FOV. The reason is that, in low FOV scenarios, there may not be any OBS  for desired transmission. As such, association to RF network becomes beneficial. However, as FOV increases, the opportunistic scheme may suffer due to the use of RF resources based on maximum received signal power and ignoring the interference at RF channel. In such a case, an isolated VLC network tends to outperform opportunistic RF/VLC as well as an RF-only configuration.

\section{Possible Extensions}
In this section, we will briefly provide guidelines to extend the framework for more comprehensive network models, adapting to  mm-wave communication channel, and/or including the impact of blockages.
\subsection{Extension to Binomial Point Process Model for OBSs}
Since LEDs/OBSs can likely be clustered in a small indoor area, Binomial point process (BPP) may also be an interesting analytical model for VLC network.
Such a network model  can be considered as a special case of this framework in which a fixed number of OBSs (say $N$) will be uniformly distributed in the circular region of radius $R_m$. As such, the number of interferers can be determined as $N_I=N\frac{(R_m^2 -r^2)}{R_m^2}$ where $r$ represents the distance of serving OBS. Since all interferers are i.i.d, the Laplace Transform (conditioned on $r$) of $\mathcal{I}_a(s)$ can be given as $(\mathcal{L}_I(s))^{N_I}$ where $\mathcal{L}_I(s)$ can be given as in {\bf Lemma~5}. Moreover, the distribution of distance of serving BS $r$ can now be given by \cite[Lemma~1]{noma}. Note that the BPP model is a special case of Matern cluster process (MCP) with a single cluster. The coverage probability of a typical user can then be given using Gil-Pelaez inversion as in {\bf Lemma~4}.

Furthermore, to consider  several rooms with clustered LEDs, we can use a full-fledged MCP or a modified Thomas cluster process~\cite{stoch}. Note that, each cluster represents a separate room thus  different clusters are unlikely to interfere with each other unless they overlap (with no blockages and/or have no boundary walls). In  contrast  to  a  homogenous  PPP,  the  location of  the user of interest  is  crucial  in cluster processes to characterize its corresponding coverage probability.

\subsection{Incorporating NLoS Reflection Components}
In addition to LoS reception, users may also receive VLC data through the reflected paths (e.g., owing to the wall, floor, human obstacles, etc.). The  channel gain via one reflector can be given as follows~\cite{reflection}:
\begin{equation}
G_{r}= \frac{ (q+1) A_{\mathrm{pd}} T(\xi) G(\xi) \rho}{2 \pi^2 D_1^2 D_2^2} dA_{\mathrm{wall}} \mathcal{U}, \quad \xi \leq \xi_{\mathrm{fov}},
\end{equation}
where $q$ denotes the LoS blocking probability, $D_1$ is the distance between an LED light and a reflecting surface, $D_2$ denotes the distance between a reflective point and user, $\rho$ is the reflectance factor, $\mathcal{U}=\mathrm{cos}(\omega_1) \mathrm{cos}(\omega_2)$, $\omega_1$ and $\omega_2$ denote the angle of irradiance to a reflective point and user, respectively, and $dA_{\mathrm{wall}}$ is a small reflective area. 
Given that $\mathbb{P}(r \leq \mathcal{T})$ and defining $X_1=R_{\mathrm{pd}}^2 P_{o} G^2_r$, we can calculate the coverage probability of a typical user as given in \textbf{Lemma~2} and defining $\Phi_\Omega (\omega)$ as $
\phi_{\Omega}(\omega)=
\mathbb{E}_{r}[\phi_{\Omega|r}(\omega)]
=\mathbb{E}_{r}\left[e^{-j \omega (X+X_1)}  \mathcal{L}_{\mathcal{I}_a|r}({-j \omega \tilde\gamma^{\mathrm{vlc}}})\right],
$
where, conditioned on $r$, $X$ and $X_1$ are independent. As the number of reflected components keeps increasing, the term $e^{-j \omega (X+X_1+\cdots)}$ can be updated accordingly.

\subsection{Incorporating Shadowing in VLC networks}
In wireless networks, the  log-normal  distribution  has  been  considered as the  most suitable statistical  model for  shadowing  effects.  We model shadowing with the log-normal distribution
where $\mu$ and $\sigma$ are the mean and standard deviation
of the shadowing channel power, respectively. 
The displacement
theorem can be used to deal with the shadow fading 
as a random and independent transformation of a given homogeneous
PPP of density $\lambda$. In this case, the resulting
point process is also a PPP with equivalent density $\lambda\mathbb{E}[S^{\frac{2}{\beta}}]$.
Applying this theorem to our case, we can handle the effect of
any distribution for the shadow fading as long as the fractional
moment $\mathbb{E}[S^{\frac{2}{\beta}}]$ is finite. 

\subsection{Extension to mm-wave RF networks}
This framework can be modified for mm-wave RF networks (assuming only LoS transmissions) since the scattered components of the transmitted signal are extremely weak, and in most scenarios can be neglected~\cite{shao2015design}. The received signal power can be modeled as $P_i^{\mathrm{rf}}= K d^{-\alpha} \chi$, where $\alpha=1.6$, $K$~[dB] = 68~dB, and $X$ can be considered as Gamma-distributed shadowing\footnote{In wireless networks, log-normal  distribution  has  been  considered as the  most suitable statistical  model for  shadowing  effects.  However,  in  spite  of  its  usefulness,   when  log-normal  is  involved  in  combination  with  other  elementary  and/or
special  function,  its  algebraic  representation  becomes  intractable.    Motivated  by
this,  \cite{abdi}  proposes
Gamma distribution as an accurate substitute for log-normal distribution.}.
This shadowing can occur for LoS propagations if one or more
Fresnel zones are blocked by large objects or humans in indoor
environments while the geometric LoS path is not blocked.
As such, by neglecting the inter-cell interference which is nearly negligible for mm-wave networks, i.e., $\mathcal{L}_I(s)=1$ and customizing the path-loss model, we can extend our results to a simplified mm-wave networks.

\section{Conclusion}
This paper provides a unified framework for the exact coverage and rate analysis of coexisting RF/VLC networks under different network configurations such as RF-only, VLC-only, opportunistic RF/VLC, and hybrid RF/VLC networks. Using sophisticated  approximations for the CDF of a Gamma random variable and complementary error function ($\mathrm{erfc}$), approximate coverage expressions have also been derived. In order to balance the traffic load distribution as per the network designer requirements, we have derived the closed-form expressions of different network parameters such as $\lambda_o$, $Z_1$ that depend on $P_s$, $\Phi_{1/2}$ and FOV of the PD detector. The parameters  for specific scenarios of practical interest such as when RF BSs are sparse $\lambda_s \rightarrow 0$, have also been derived.
Numerical results show that the performance gains of VLC network can be guaranteed over the RF network only for a certain range of deployment heights or intensities. 
Therefore, it is crucial to select the correct intensity for a given deployment height of OBSs. 
Optimal FOV tends to decrease as the height of OBSs increases since a larger height invites more interference as well as higher path-loss.
The opportunistic selection, i.e., based on maximum received signal power, is more suitable for scenarios where the interference effects are not dominant, e.g., low FOV or low intensity of OBSs.
For higher interference scenarios, the opportunistic scheme deteriorates system performance due to wrongfully connecting to RF networks with higher interference instead of VLC networks. The presented framework can be extended to consider more sophisticated association and traffic load balancing schemes. Moreover, precise approximations for the interference statistics would also be of interest in simplifying the coverage probability expressions and optimize network parameters. 

\bibliography{Ref1}
\bibliographystyle{IEEEtran}
\end{document}